\newtheorem{ppn}{Proposition}
\newtheorem{lemma}{Lemma}
\newcommand{\lmref}[1]{\textbf{Lemma \ref{#1}}}
\newcommand{\algref}[1]{\textbf{Algorithm \ref{#1}}}
\begin{document}
	
	\title{Time-Continuous Frequency Allocation \\ for Feeder Links of Mega Constellations \\ with Multi-Antenna Gateway Stations}
	
	\author{Zijun Liu, \textit{Student Member, IEEE}, Yafei Wang, \textit{Graduate Student Member, IEEE}, \\Tianhao Fang, \textit{Graduate Student Member, IEEE},  Wenjin Wang, \textit{Member, IEEE}, Zhili Sun, \textit{Senior Member, IEEE}
		\thanks{Manuscript received xxx.}
		\thanks{Zijun Liu, Yafei Wang, Tianhao Fang, and Wenjin Wang are with the National Mobile Communications Research Laboratory, Southeast University, Nanjing 210096, China, and also with Purple Mountain Laboratories, Nanjing 211100, China (E-mail: \{zijunliu, wangyf, tienhfang, wangwj\}@seu.edu.cn).}
		\thanks{Zhili Sun is with the Institute of Communication Systems, University of Surrey, GU2 7XH Guildford, U.K. (e-mail: z.sun@surrey.ac.uk).}
	}
	
	\markboth{}%
	{Shell \MakeLowercase{\textit{et al. }}: A Sample Article Using IEEEtran. cls for IEEE Journals}
	
	\maketitle
	
	\begin{abstract}
		With the {recent rapid advancement} of mega low earth orbit (LEO) satellite constellations, multi-antenna gateway station (MAGS) has emerged as a key enabler to support extremely high system capacity via massive feeder links. However, the densification of both space and ground segment {leads} to reduced spatial separation between links, posing unprecedented challenges of interference exacerbation. This paper investigates graph coloring-based frequency allocation methods for interference mitigation (IM) of mega LEO systems. We first reveal the characteristics of MAGS interference pattern and formulate the IM problem into a $K$-coloring problem using an adaptive threshold method. Then we propose two tailored graph coloring algorithms, namely Generalized Global (GG) and Clique-Based Tabu Search (CTS), to solve this problem. GG employs a low-complexity greedy conflict avoidance strategy, while CTS leverages the unique clique structure brought by MAGSs to enhance IM performance. Subsequently, we innovatively modify them to achieve time-continuous frequency allocation, which is crucial to ensure the stability of feeder links. Moreover, we further devise two mega constellation decomposition methods to alleviate the complexity burden of satellite operators. Finally, we propose a list coloring-based vacant subchannel utilization method to further improve spectrum efficiency and system capacity. Simulation results {on Starlink constellation of the first and second generations with 34396 satellites} demonstrate the effectiveness and superiority of the proposed methodology. 
	\end{abstract}
	
	\begin{IEEEkeywords}
		Graph theory, interference mitigation, multi-antenna gateway station, time-continuous frequency allocation, mega LEO satellite constellation. 
	\end{IEEEkeywords}
	
	\section{Introduction}\label{SecI}
	
	\IEEEPARstart{S}{ATELLITE} communication (SatCom) is among the key technologies of realizing the 6G vision of global seamless coverage, fundamentally revolutionizing the availability of broadband Internet \cite{6G,sats2,sat3}. Recently, low earth orbit (LEO) constellations are developing vigorously due to the reduction of satellite launch costs as well as  the advantages of low path loss and latency, providing high-quality pervasive connectivity even in remote regions lacking ground-based communications \cite{sats1}. 
	In order to satisfy the demand of high-speed data transmission, the scale of both space and ground segment of LEO systems will be unprecedentedly extended, enabling a substantial amount of feeder links to be established to offload tremendous data flow from space \cite{constellations}. Particularly, multi-antenna gateway station (MAGS) has emerged as a necessity at ground segment, trading off among the number of antennas required for supporting large system capacity, the limitations imposed by various geographical factors, and the financial feasibility of operation and maintenance \cite{gs1,gs2,starlinkiton,esstarlink}. However, this simultaneously aggravates the risk of harmful interference within and between LEO systems, causing severe throughput degradation or even service interruption \cite{worry,densesky}. Therefore, effective interference mitigation (IM) is an essential prerequisite for ensuring continuous high quality of service (QoS) \cite{dcss}.
	
	\subsection{Motivations}
	The ever-growing scale of LEO systems with MAGSs imposes three  significant challenges to designing IM methods. First, the dense deployment of MAGSs throughout the Earth inevitably leads to diminished spatial separation between feeder links, thereby severely exacerbating interference. Second, tremendous amount of feeder links significantly magnify the dimensions of IM problems, which results in increased time complexity. Finally, time-continuous resource allocation of feeder links is of utmost significance for both  maintaining the stability of SatCom systems and avoiding excessive signaling overhead, yet it has been scarcely considered when designing IM approaches. Although the highly dynamic LEO constellations results in a constantly varying interference environment and it becomes more turbulent in mega LEO systems, the resource allocation schemes are still required not to fluctuate dramatically during operation.
	
	Under such circumstances, it is unclear whether existing IM methods can still yield satisfactory performance under complexity and time continuity constraints, making it necessary to explore and utilize the interference characteristics of MAGSs to develop more advanced IM approaches.
	
	\subsection{Literature Review}
	Interference occurs when two signals in close spatial proximity occupy the same frequency spectrum simultaneously, i.e., they overlap in space, time and frequency domains concurrently. Therefore, IM can be achieved by isolating interfering sources in at least one of the three domains, which is no exception for SatCom systems. 
	
	Time domain IM for SatCom systems primarily includes beam hopping, which is usually optimized with space domain jointly in  user link resource allocation. At each time slot, the distance between active beams is maximized so that adjacent beams tend to illuminate in different time slots \cite{msbhlb,ngbh,bhmarl}. However, this technique is not applicable to feeder links due to high requirement of time continuity. 
	
	Space domain IM is applicable to both user and feeder links, in which ensuring sufficient angular separation is the main consideration \cite{newspace,worry,densesky,beampoint,MAGSIC,jptc,sbdp,wsy}. For feeder links, this can be achieved by altering satellite selection (look-aside) \cite{newspace,worry,densesky} or deviating beam pointing direction \cite{beampoint} when extremely small link angle occurs. \cite{MAGSIC} further optimized satellite selection with power allocation jointly for MAGSs. While these methods are effective for relatively small constellations, it is unclear whether they can adapt to mega LEO systems due to diminished spatial freedom caused by extremely high link density. Moreover, reducing interference by increasing angular distance also influences the quality of severing links. 
	
	Freqency domain IM methods mainly eliminate interference by separating interfering sources into subchannels using different frequency, which is also called channelization. For feeder links, previous works analyzed the effect of random subchannel allocation and identified it as a promising solution to enable constellation coexistence \cite{risk,newspace,worry,densesky}. \cite{aipredict} utilized a {neural network} trained on historical frequency usage data to predict and avoid co-frequency interference. By leveraging a more accurate integer linear programming (ILP)-based interference modeling, \cite{freqplan} proposed a high-performance generalized frequency plan method for user beams, and it was further optimized jointly with satellite routing problem in \cite{routingfassign}.
	While the complexity of ILP grows exponentially with problem dimensionality, graph coloring (GC)-based IM achieves excellent trade-off between {performance} and scalability, which has been widely investigated and deployed in various scenarios including SatCom  \cite{gcfirst,mmwave,topoif,mimo,uav, gcsatbeam}. By introducing the quantum annealing methodology, Hamiltonian Reduction is leveraged to further reduce its complexity \cite{qagc}. However, to the best of our knowledge, little effort has been devoted to the investigation of GC-based feeder link IM in mega LEO systems with MAGSs. Moreover, not only have the interference characteristics of MAGSs not been studied, but also the requirement of {time-continuous frequency allocation (TCFA)} for feeder links remains unconsidered.
	
	\vspace{-2mm}
	\subsection{Contributions}
	\vspace{-1mm}
	Inspired by the above discussions, in this paper we analyze the  interference characteristics of MAGSs and propose a series of IM methods based on GC to solve the dynamic IM problem of mega LEO systems. The main contributions and innovations of this work are summarized as follows.
	
	\begin{itemize}
		\item We first investigate the interference pattern of MAGSs  and reveal that mega LEO systems suffer from much stronger interference mainly due to diminished angular separation at both the satellite and the GS side, as well as increased number of interfering sources. Then we formulate the IM problem of mega LEO systems based on the regulation of ITU and transform it into a $K$-coloring problem using an adaptive threshold method, avoiding both excessive and insufficient interference protection.
		\item Two GC algorithms, Generalized Global (GG) and Clique-Based Tabu Search (CTS), are devised to solve the $K$-coloring problem, with GG prioritizing time efficiency by employing a greedy conflict avoidance strategy and CTS offering superior IM performance by leveraging the inherent clique structure brought by MAGSs. Based on the dynamic feature of the interference graph during operation, we propose time-continuous GG and CTS to maintain the IM performance under the requirement of TCFA. By exploiting the geographical distribution properties of interference, we further propose two mega constellation decomposition methods aiming at complexity reduction. 
		\item Based on system-level IM which ensures accessibility, we reveal that the proposed methodology can also yield outstanding system capacity performance. In order to further enhance spectrum efficiency, a list coloring-based vacant subchannel utilization method is devised to achieve more flexible frequency allocation without causing additional interference.
	\end{itemize}
	
	The rest of the paper is organized as the following. Section \ref{SecM} describes the system model and analyzes the interference characteristics of mega LEO systems. Section \ref{SecG} makes IM problem transformation based on GC. Section \ref{SecA} elaborates two time-continuous GC algorithms. Section \ref{SecR} delves into system capacity maximization. Section \ref{SecS} reports the simulation results. Section \ref{SecE} concludes this paper {and identifies directions for future research.}
	
	\begin{figure}[t]
		\centering
		\includegraphics[width=0.4\textwidth]{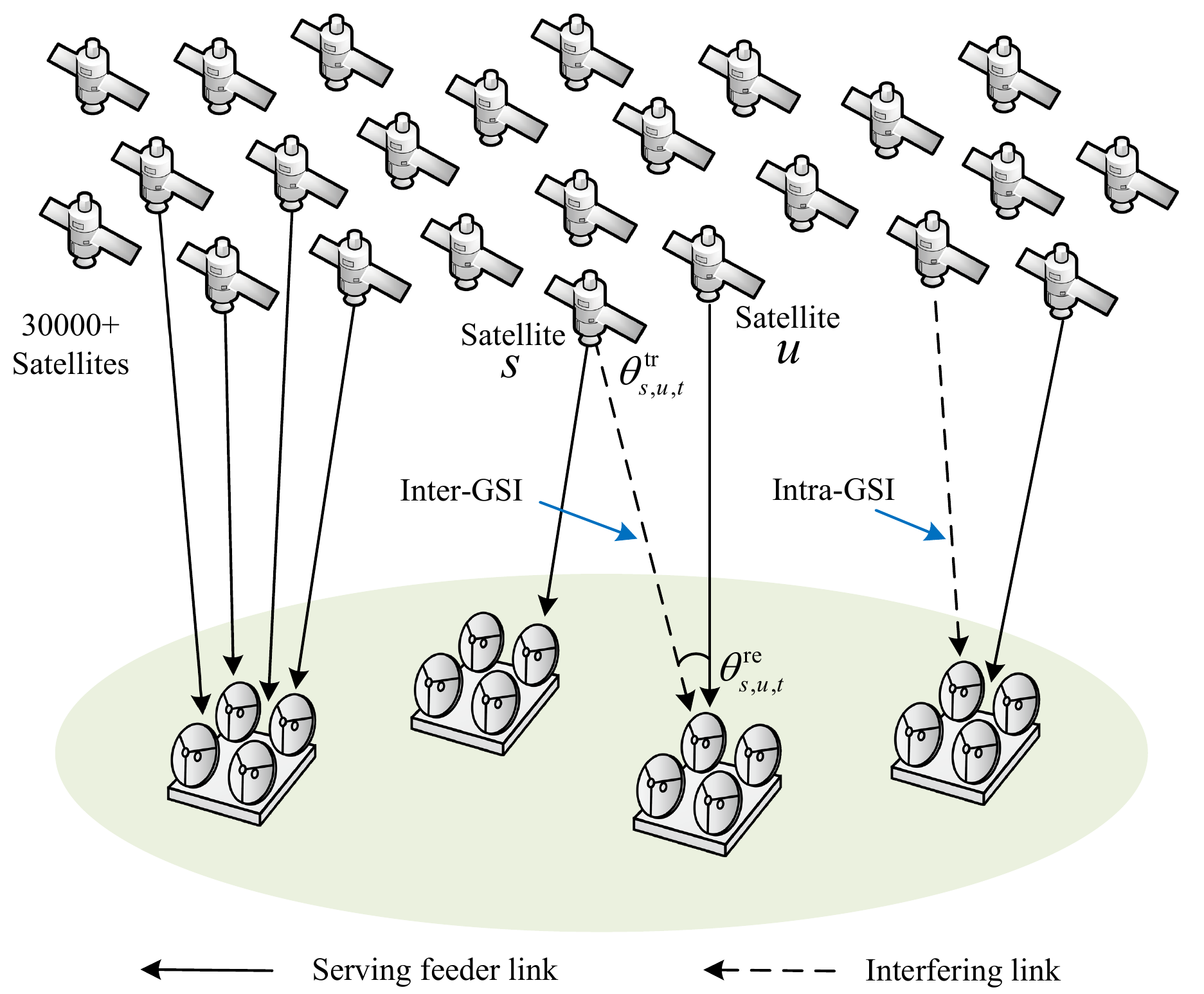}
		\caption{Interference scenario of a SatCom system with mega constellation and MAGSs.}
		\label{scene}
	\end{figure} 
	
	\section{System Model}\label{SecM}
	
	As shown in Fig. \ref{scene}, we consider the downlink scenario of a mega LEO system consisting of $S$ satellites and $Q$ multi-antenna gateway stations, where satellites transmit signals to MAGSs via feeder links simultaneously interfering others. The set of satellites and MAGSs are denoted as $\mathcal{S}=\{{1,2,...,S}\}$ and $\mathcal{Q}=\{{1,2,...,Q}\}$ respectively. 
	Each satellite is equipped with a single transmitting antenna of pattern S.1528 \cite{s1528} and each MAGS has $N_{\rm at}$ receiving antennas of pattern S.1428 \cite{s1428}, {as depicted in Fig. \ref{mavssa}}. The antenna gains of S.1528 and S.1428 versus off-axis angle $\phi$ are denoted as $G_{\rm s}(\phi)$ and $G_{\rm g}(\phi)$ respectively. 
	
	To reduce path loss and enhance link capacity, MAGSs employ maximum elevation angle principle for selecting communication satellites \cite{s1325}. MAGSs take turns selecting the currently available satellite with the largest elevation angle until all their antennas are occupied or there are no candidate satellites remaining. The set of selected working satellites at time slot $t$ and its cardinality are denoted as $\mathcal{W}(t)=\{s_1,s_2,...,s_{|\mathcal{W}(t)|}\}$ and $|\mathcal{W}(t)|$, respectively.
	
	{Channelization is employed for the system to realize controllable interference management \cite{risk,newspace,worry,densesky}, dividing the whole downlink feeder channel bandwidth $B_{\rm w}$ into $C$ subchannels with  equivalent bandwidth $B=B_{\rm w}/C$. The set of subchannel indices is represented as $\mathcal{C}=\{{1,2,...,C}\}$ and the index of subchannel assigned to satellite $s_i$ is denoted as $c_{s_i} \in \mathcal{C}$. For simplicity, we assume that each feeder link occupies exactly one complete subchannel.}
	
	\begin{figure*}[t]
		\centering
		\includegraphics[width=1\textwidth]{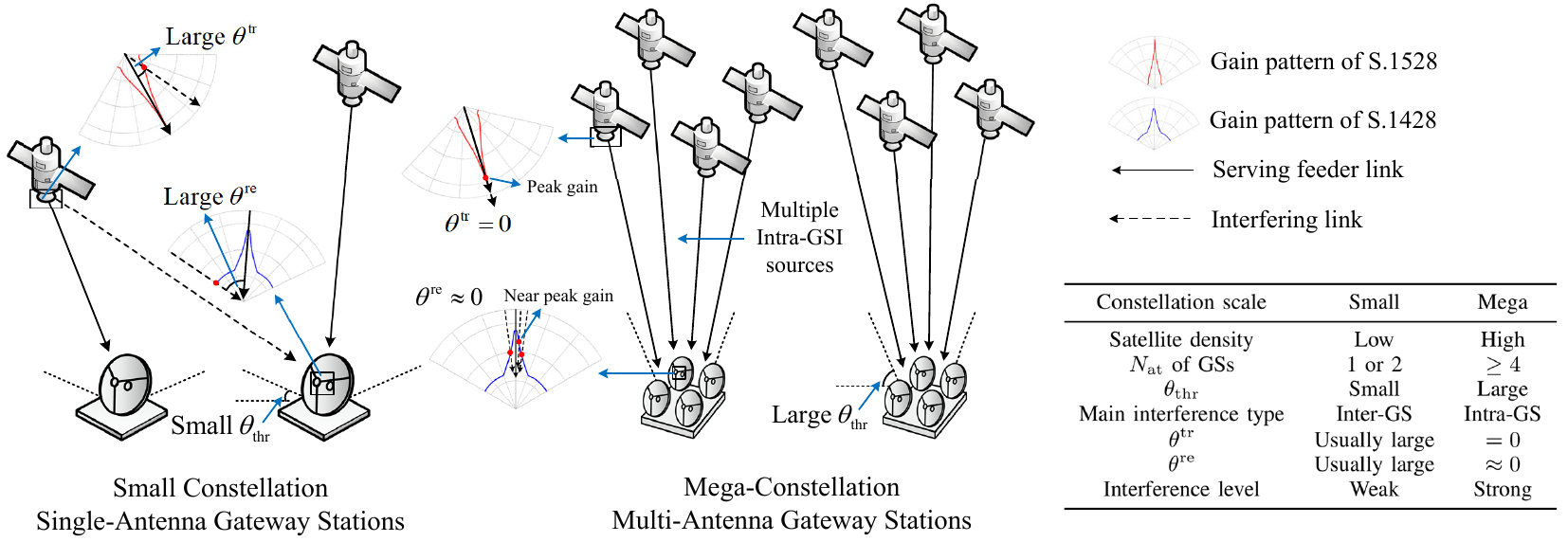}
		\caption{Interference comparison of MAGSs and SAGSs.}
		\vspace{-10pt}
		\label{mavssa}
	\end{figure*} 
	
	\subsection{Interference Model}
	
	Each feeder link between satellite $u$ and its receiving GS $q_u$ is interfered by all the other satellites above the horizontal plane of $q_u$ simultaneously communicating on subchannel $c_u$, forming the interfering satellite set $\mathcal{S}^I_{u}(c_u,t)$. As depicted in the middle part of Fig. \ref{scene}, for each interfering satellite $s \in \mathcal{S}^I_{u}(c_u,t)$ transmitting signal to GS $q_s$, the \textit{single-link interference} received by GS $q_u$ from satellite $s$ at time slot $t$ is calculated as
	\begin{equation}
		\label{I1}
		I_{s, u, t}=\frac{P_{s}^{\rm tr}(t)G_{\rm s}(\theta^{\rm tr}_{s, u,t})G_{\rm g}(\theta^{\rm re}_{s, u,t})}{(4\pi d_{s, u,t}/ \lambda_{s})^2},
	\end{equation}
	where $P_{s}^{\rm tr}(t)$ denotes the transmission power of satellite $s$, $\theta^{\rm tr}_{s, u,t}$ and $\theta^{\rm re}_{s, u,t}$ are off-axis angles of satellite $s$ and GS $q_u$ antennas respectively, $d_{s, u,t}$ represents the link distance from satellite $s$ to GS $q_u$, and $\lambda_{s}$ is the carrier wavelength. For simplicity in the following discussions, we classify single-link interference into two categories: intra-GS interference (Intra-GSI) if $q_s=q_u$ and inter-GS interference (Inter-GSI) otherwise ($q_s\neq q_u$). As a result, the \textit{aggregate interference} received by the corresponding receiving antenna of GS $q_u$  can be written as \cite{cofrin}
	\begin{equation}
		\label{sumI}
		\hat{I}_{u,t}=\sum_{s\in \mathcal{S}^I_{u}(c_u,t)} I_{s, u,t}.
	\end{equation} 
	We assume that receiving antennas have the same noise temperature $T_{\rm n}$, so that their noise power is all $N=\kappa T_{\rm n}B$, where $\kappa$ is the Boltzmann constant. Hence, the aggregate interference to noise ratio (I/N) of the feeder link between satellite $u$ and GS $q_u$ is defined as
	\begin{equation}
		(\hat{I}/N)_{u,t}=\frac{\hat{I}_{u,t}}{\kappa T_{\rm n}B}.
	\end{equation} 
	Similarly, single-link I/N for $I_{s, u,t}$ is defined as 
	\begin{equation}
		(I/N)_{s, u,t}=\frac{I_{s, u,t}}{\kappa T_{\rm n}B}.
	\end{equation} 
	
	In order to ensure the quality of feeder links, ITU regulates that the aggregate I/N of any feeder link should not exceed $I_{\rm th}^{\rm R}=-12.2$ dB \cite{s1432,densesky,beampoint,starlinkiton,mesinr}. Otherwise, it is considered as link failure (LF).   
	
	\subsection{Interference Analysis of Gateway Stations: Multi Antenna vs Single Antenna}\label{subsecIstr}
	
	Small scale SatCom systems mainly adopt single antenna gateway stations (SAGSs), which are typically spaced far apart and may connect to satellites with small elevation angles. As a result, sufficient spatial separation between feeder links is usually guaranteed, so that harmful interference rarely occurs. In comparison, mega LEO systems with dense space segment and MAGSs exhibit distinct interference patterns which severely exacerbate interference. Specifically, the main differences are as follows:
	
	\textit{1)} \textit{Smaller $\theta^{\rm tr}_{s, u}$:\footnote{For analyses irrelevant to time, the subscript $(\cdot)_t$ is omitted for simplicity.}} Different from SAGS, MAGS can access multiple satellites simultaneously. For all pairs of satellite $s$ and $u$ communicating with the same MAGS, they point their antennas directly towards the same location, resulting in $\theta^{\rm tr}_{s, u}=0$.
	
	\textit{2)} \textit{Smaller $\theta^{\rm re}_{s, u}$:} With the densification of mega constellations, MAGSs will have sufficient candidate satellites to select even with very large elevation angle threshold $\theta_{\rm thr}$. However, according to the maximum elevation angle principle, satellites selected by the same MAGS are very close to each other, which makes $\theta^{\rm re}_{s, u}$ extremely small. 
	
	\textit{3)} \textit{Multiple interfering sources:} Unlike SAGSs which only experience Inter-GSI, each link of MAGS encounters both Inter-GSI and strong Intra-GSI from the other $N_{\rm at}-1$ satellites communicating with the same MAGS.
	
	\textit{4)} \textit{Much stronger interference:} The three aforementioned characteristics of MAGS jointly lead to serious deterioration of interference, as summarized in Fig. \ref{mavssa}. On one hand, the gain of both S.1528 and S.1428 antennas increases sharply when the off-axis angle is close to zero. According to equation (\ref{I1}), smaller $\theta^{\rm tr}_{s, u}$ and $\theta^{\rm re}_{s, u}$ result in extremely strong Intra-GSI, which can even reach the level of signal  power. On the other hand, multiple interfering sources will further degrade the aggregate I/N by approximately $10\log_{10}(N_{\rm at}-1)$ dB compared to the single-link I/N. For example, this value is 8.45 dB for $N_{\rm at}=8$ and 13.8 dB for $N_{\rm at}=25$.
	
	Furthermore, {our simulation results show} the intensity of interference in mega LEO systems without channelization in Fig. \ref{i/nex}, where the complementary cumulative distribution functions (CCDF) of aggregate I/N for all links versus $N_{\rm at}$ are shown. The system parameters used in this simulation are provided in Table \ref{paras}. It is evident that the LF rate soars to 96\% even for $N_{\rm at}=2$, disrupting the normal operation of the system. 
	
	\begin{figure}[t]
		\centering
		\includegraphics[width=0.75\linewidth]{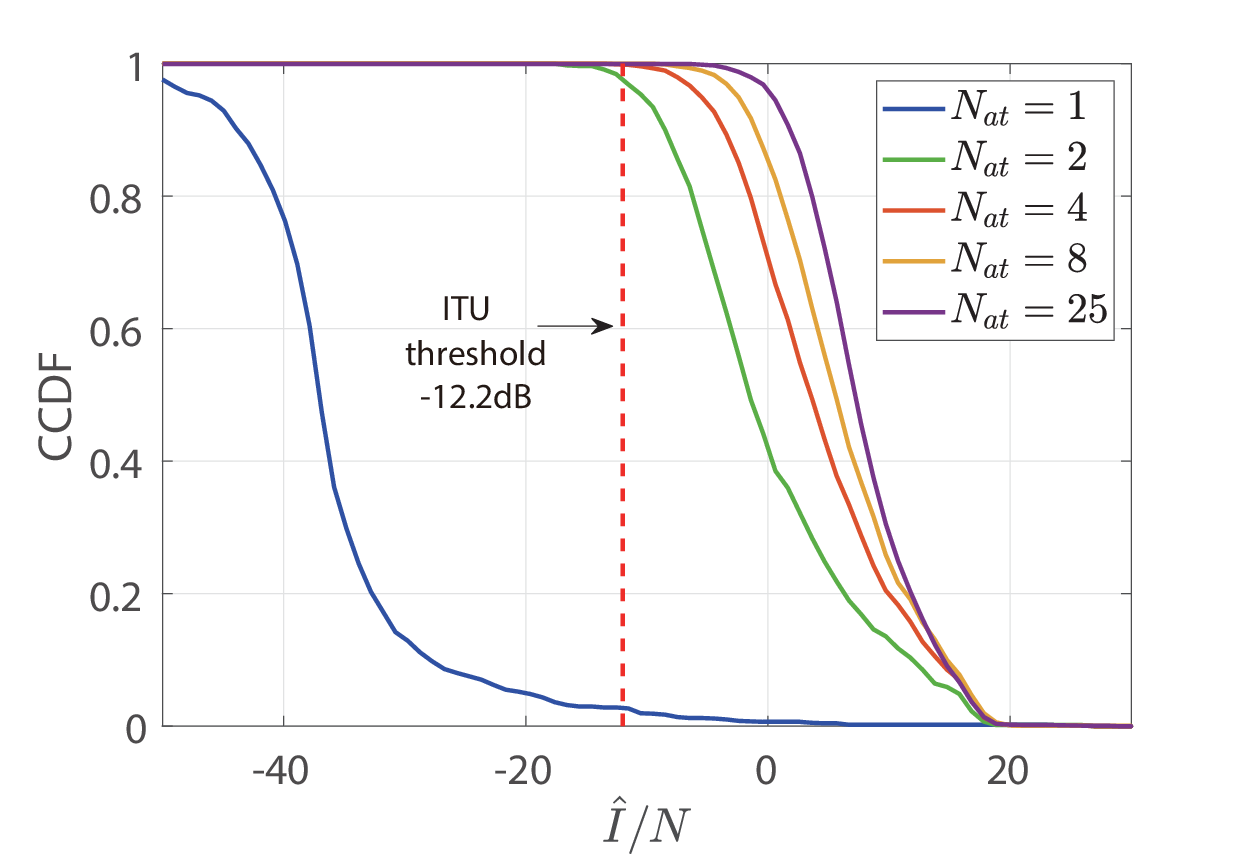}
		\caption{Aggregate I/N distributions under different $N_{\rm at}$ without channelization. The vertical dashed line stands for the {ITU threshold $I_{\rm th}^{\rm R}=-12.2$ dB}.}
		\label{i/nex}
	\end{figure}
	
	\subsection{Problem Formulation}
	
	The above analysis emphasizes the necessity of channelization for IM of mega LEO systems. The goal of channelization-based IM is to properly assign the limited subchannel resources to feeder links with comprehensive consideration of aggregate I/N for all links, so that most harmful interfering link pairs are isolated in frequency domain. Evaluated by the LF occurrence of the system, i.e., the number of feeder links with aggregate I/N greater than {ITU} threshold $I_{\rm th}^{\rm R}$, the optimization problem of channelization-based IM for mega LEO systems at time slot $t$ can be formulated as
	
	\begin{equation}
		\begin{aligned}
			\text{P1:} \quad &\min_{\substack{{\bf c}}} \ 
			f_{\rm LF}({\bf c},t)=\sum_{s  \in \mathcal{W}(t)} U\left( (\hat{I}/N)_{s,t}-I_{\rm th}^{\rm R} \right)\\
			&\text{s.t.} \quad
			c_{s} \in \mathcal{C}, \quad \forall s \in \mathcal{W}(t), \\
		\end{aligned}
	\end{equation}
	where ${\bf c} = [c_{s_1},c_{s_2},...,c_{s_{|\mathcal{W}(t)|}}]$ represents the subchannel assignment scheme and $U(\cdot)$ represents the heaviside function. The minimum possible value of $f_{\rm LF}({\bf c},t)$ is 0, which means perfect IM of the entire system. 
	
	Problem P1 is an NP-hard integer nonlinear programming (INLP) problem \cite{nphard}. Though it can be tackled by optimization solvers, the time complexity grows exponentially with the problem size, imposing a heavy computational burden on satellite operators.
	
	\section{Graph Coloring-Based \\ Problem Transformation}\label{SecG}
	In the considered system, if feeder links, subchannels and strong interference between links are modeled as vertices, colors and edges respectively, then the IM problem based on subchannel allocation can be transformed into a graph coloring problem. Inspired by this, we transform P1 into a $K$-coloring problem, which can be tackled more efficiently.
	
	We first introduce some important definitions related to GC.
	
	\textit{Definition 1: $K$-coloring problem.} Given an undirected graph $G=\{V,E\}$ with vertex set $V$ and edge set $E$, the $K$-coloring problem is to assign a color from set $\{1,2,...,K\}$ to each vertex in $V$, ensuring that adjacent vertices do not share the same color.
	
	\textit{Definition 2: $K$-colorable.} Graph $G$ is $K$-colorable if an aforementioned coloring scheme exists for $K$.
	
	\textit{Definition 3: Chromatic number $\chi(G)$.}
	$\chi(G)$ is the smallest integer $K$ to make $G$ $K$-colorable.
	
	\textit{Definition 4: Degree.} The degree of vertex $x$ is the number of its adjacent vertices, which is denoted as $d(x)$.
	
	Let $G(t)=\{V(t),E(t)\}$ denote the interference graph at time slot $t$, in which $V(t)=\mathcal{W}(t)$ is the vertex set of working satellites at time slot $t$, and $E(t)=\{(u,s)\mid e_{u,s}(t)=1,\forall u,s \in V(t) \}$ represents interference relationships between feeder links of satellites. In order to transform the aggregate interference problem P1 into a GC problem, most existing works use a predefined threshold to determine whether each interfering link is harmful and needs to be modeled on $G(t)$  \cite{pilot1,mimo,routingfassign}. However, this can lead to less accurate assessment of harmful interference in complicated and dynamic interference environments, which are especially intrinsic to mega LEO systems. To this end, we refine it by integrating an adaptive threshold for each link, so that edges are constructed according to the following rule:
	\begin{equation}
		e_{u,s}(t)=
		\begin{cases}
			1, \quad \forall s \in \mathcal{W}(t), \forall u\in \mathcal{S}^I_{s}(c_s,t), I_{u, s,t} \ge I_{\rm th}^{s}, \\
			0, \quad {\rm otherwise},
		\end{cases}
		\label{edge}
	\end{equation}
	where the adaptive interference threshold $I_{\rm th}^{s}$ for satellite $s$ is determined as the maximum value satisfying 
	\begin{equation}
		\sum_{\substack{
				u\in \mathcal{S}^I_{s}(c_s,t) \
				I_{u, s,t} \leq I_{\rm th}^{s} \\
				I_{u, s,t} \geq I_{\rm th}^{\Gamma}
		}} I_{u, s,t} \leq I_{\rm th}^{\rm R} \quad \forall s \in \mathcal{W}(t),
		\label{Iths}
	\end{equation}
	in which $I_{\rm th}^{\Gamma}$ is the weak interference threshold for the whole system. $I_{\rm th}^{\Gamma}$ is the predefined  threshold that excludes numerous weak interfering links, and $I_{\rm th}^{s}$ is designed to adaptively capture almost all harmful interference of $s$ with the fewest possible edges. Fig. \ref{graph} explains the construction of $G(t)$. On this basis, the number of conflicting edges for subchannel assignment scheme (coloring scheme) ${\bf c}$ is calculated by
	\begin{equation}
		f_{\rm con}({\bf c},t)=\sum_{s  \in \mathcal{W}(t)} \sum_{u  \in \mathcal{W}(t)} e_{u,s}(t) \delta(c_{s}-c_{u}),
	\end{equation}
	where $\delta(x)=1$ if $x=0$ and 0 otherwise, then P1 is transformed into 
	\begin{equation}
		\begin{aligned}
			\text{P2:} \quad &\min_{\substack{{\bf c}}}
			\ f_{\rm con}({\bf c},t) \\
			&\text{s.t.} \quad
			c_{s} \in \mathcal{C}, \quad \forall s \in \mathcal{W}(t). \\
		\end{aligned}
	\end{equation}
	P2 is a $C$-coloring problem aiming at minimizing the number of conflicting edges on $G(t)$, thereby mitigating most of the strong interference and reducing the likelihood of LF.
	
	\begin{figure}[t]
		\centering
		\includegraphics[width=0.48\textwidth]{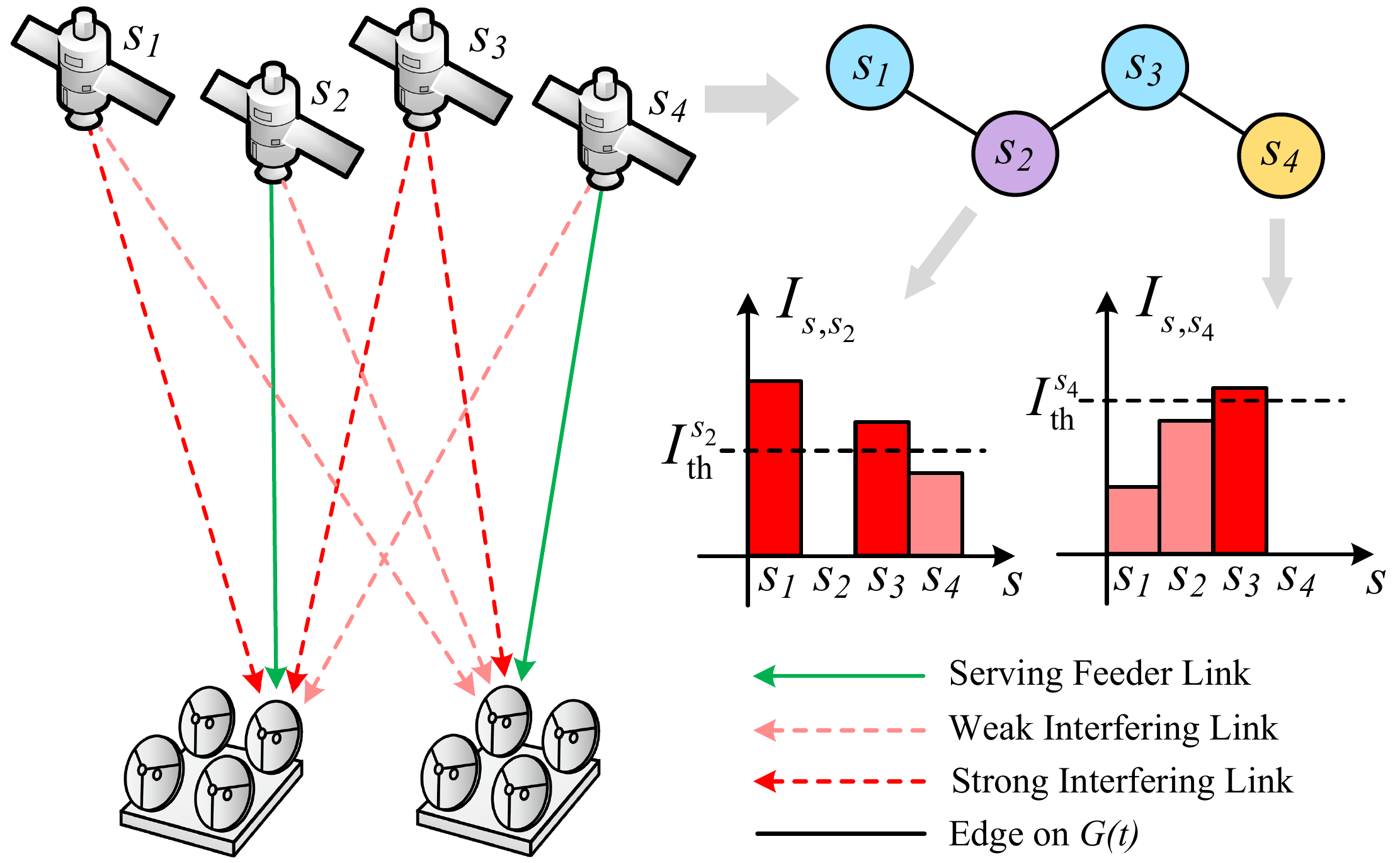}
		\caption{Adapitve edge construction of satellite $s_2$ and $s_4$ on interference graph $G(t)$, in which each satellite has its own adaptive threshold. If an edge exists between two vertices, they should use different colors to avoid conflict, i.e., their corresponding satellites should communicate in different subchannels to avoid harmful interference. }
		\label{graph}
	\end{figure}
	
	\section{Time-Continuous Graph Coloring Algorithms for Dynamic Interference Mitigation}\label{SecA}
	
	In this section, we propose two time-continuous graph coloring algorithms to solve the dynamic IM problem of mega LEO systems. Section \ref{S-GG} and \ref{S-CTS} elaborate the Generalized Global (GG) algorithm and the Clique-Based Tabu Search (CTS) algorithm respectively, with GG prioritizing time efficiency and CTS offering superior IM performance. In Section \ref{S-CFA} we modify the above GC algorithms to realize TCFA without obviously  compromising IM performance. Decomposition of mega constellations aiming at reducing complexity is further discussed in Section \ref{S-CD}.
	
	\subsection{Generalized Global Algorithm}\label{S-GG}
	
	\textit{Global} is a classical GC algorithm renowned for both low complexity and high coloring quality in handling large practical graphs \cite{wp,dygcp}. {
		The procedure of \textit{Global} is outlined as \algref{alg0}. The two key aspects of \textit{Global} are the coloring order and the vertex coloring strategy. \textit{Global} colors vertices in descending order of their degrees (line 2 and line 4) based on the simple observation that vertices with larger degrees should be colored earlier since they have less coloring freedom, and assign each vertex a color that is not used by any of its neighbours (line 5) to ensure that conflicts are avoided locally.}
	
	However, \textit{Global} fails to solve the conflict minimization problem when $G(t)$ is not $C$-colorable in SatCom systems with limited spectrum resources.\footnote{In SatCom systems, excessively dividing subchannels by increasing $C$ will decline spectrum efficiency, thus degrading system capacity. This issue will be further discussed later.} In addition, since $f_{\rm LF}({\bf c},t)$ and $f_{\rm con}({\bf c},t)$ are not strictly equivalent, the optimal solution for P2 could be suboptimal for P1.
	
	Based on the above considerations, we propose the Generalized Global algorithm to effectively reduce LF in our scenario. First, we enhance the coloring strategy of \textit{Global} to address conflict minimization. Second, we diversify the solution exploration of P2 by experimenting with various coloring orders, thereby increasing the chance of finding better solutions for P1. Finally, we add a global coloring adjustment stage to mitigate the inefficiencies of the coloring strategy based only on local information.
	
	GG algorithm involves $N_{\rm GG}$ independent optimizations, selecting the best outcome as the final solution. Each optimization comprises three stages. At stage 1, the degree vector of all  vertices is calculated and slightly perturbed by adding a Gaussian noise vector $\bf n$ with $n[s] \sim \mathcal{N}_{\rm Gau}(0,\sigma^2), \forall s$, i.e.,
	\begin{equation}
		\bf d'=\bf d+\bf n.
	\end{equation}
	Then, sort all vertices in descending order based on $\bf d'$. Since $\bf n$ is different for each optimization, multiple promising solutions for P1 obtained through various coloring orders can be explored. Subsequently, apply \textit{Global} on $G(t)$. 
	
	Stage 2 employs a generalized vertex coloring strategy to reduce conflicts for uncolored vertices, and they are also processed in descending order of $\bf d'$. Each uncolored vertex $s$ is dyed by a least used color 
	\begin{equation}
		c_s= \underset {c \in \mathcal{C}}{\operatorname {arg\,min}   }  \sum_{u\in \mathcal{N} (s)} \delta(c_u-c),
		\label{GG}
	\end{equation}
	instead of a vacant color, so that local conflict minimization is achieved.\footnote{Ties are broken randomly or by choosing the color with the smallest index.} It is worth noting that by addressing conflicts after the non-conflicting \textit{Global} in stage 1, the uncertainty of uncolored neighbors for each vertex is minimized, so that more comprehensive decision can be made for conflict reduction. 
	
	\begin{algorithm}[t]
		\caption{Global Algorithm}
		\begin{algorithmic}[1]
			\STATE \textbf{Input}: $G(t)$. 
			\STATE Calculate degree vector $\bf d$ on $G(t)$ by Definition 4.
			\STATE Initialize ${\bf c}={\bf 0}$ .
			\STATE \textbf{for} $s \in V(t)$ in descending order of $d(s)$ \textbf{do}
			\STATE \quad $c_s=\min\{k \mid k \in \mathbb{Z^+}, k \notin \bigcup\limits_{u \in \mathcal{N}(s)} c_u  \}$.
			\STATE \textbf{end for}
			\STATE \textbf{Output}: $\bf c$.
		\end{algorithmic}
		\label{alg0}
	\end{algorithm}
	
	At stage 3, the colors of all vertices are decided again according to strategy (\ref{GG}) in ascending order of  $\bf d'$ based on the coloring scheme obtained in previous stages, which aims at further reducing conflicts. We design this stage based on the observation that the latter colored vertices in stage 2 tend to concentrate conflicts. By redistributing conflicts more evenly on the whole graph in the reverse order of previous stages, the influence of vertex coloring order is compensated and local inefficiencies are mitigated. 
	
	The complexity of GG is $\mathcal{O}\left( N_{\rm GG} (|\mathcal{W}(t)|+|E(t)|) \right)$ since the generalized coloring strategy (\ref{GG}) does not introduce extra computational overhead compared to \textit{Global}. By observing that coloring vertices by different orders is independent of each other, the $N_{\rm GG}$ optimizations of GG can be parallelized, thereby reducing the parallel complexity of GG to only $\mathcal{O}\left( |\mathcal{W}(t)|+|E(t)| \right)$. 
	
	\subsection{Clique-Based Tabu Search Algorithm}\label{S-CTS}
	While GG is universally applicable to all the conflict minimization GC problems, specific properties in particular scenarios can facilitate the development of a tailored algorithm with superior performance. To this end, we propose the Clique-Based Tabu Search (CTS) algorithm, which leverages the distinctive clique structure of $G(t)$ brought by MAGSs to mitigate all the Intra-GSI fundamentally and further reduces Inter-GSI globally by Tabu Search. 
	
	\textit{Definition 5: Clique.} A clique of graph $G$ is a subset of vertices that every two distinct vertices in it are adjacent \cite{cligc}. 
	
	As analyzed in Section \ref{subsecIstr}, strong Intra-GSI of MAGSs is the predominant cause of interference exacerbation of mega LEO systems. Simulations also show that $(I/N)_{s, u}$ exceeds $I_{\rm th}^{\rm R}$ for almost all satellite pairs $s$ and $u$ selected by the same MAGS. Based on this, each group of satellites selected by the same GS forms a clique or quasi-clique (lacking a few edges to be a clique) of size $N_{\rm at}$. Therefore, denoting the satellite selected by the $n$-th antenna of GS $q$ as $s_q^n$,\footnote{In this subsection we index satellites by GSs and antennas, and they belongs to the same set $\mathcal{W}(t)$ as before.} $V(t)$ can be partitioned by
	\begin{equation}
		V(t)=\mathscr{C}_1(t) \cup \mathscr{C}_2(t) \cup ... \cup \mathscr{C}_{Q}(t),
	\end{equation}
	where each subset given by $\mathscr{C}_i(t)=\{s_i^1,s_i^2,...,s_i^{N_{\rm at}}\}$
	corresponds to the clique or quasi-clique of the $i$-th MAGS. For the sake of generality, $s_q^n$ is treated as a virtual satellite if its corresponding GS antenna is vacant. Vertices of virtual satellites have no adjacent edges and can be colored arbitrarily.
	
	After the clique partition (CP) of $G(t)$, the coloring of each clique can be easily acquired by the following lemma. 
	
	\begin{lemma}
		For a clique $\mathscr{C}$ with $C$ vertices, it holds that $\chi(\mathscr{C})=C$, i.e., all conflicts can be avoided if and only if each vertex is assigned a different color from the set $\{1,2,...,C\}$.\footnote{Typically, removing a few edges from a clique does not alter its chromatic number, hence this lemma is still valid for most quasi-cliques.}
		\label{lemma1}
	\end{lemma}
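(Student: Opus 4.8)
The plan is to establish the equality $\chi(\mathscr{C})=C$ for a clique on $C$ vertices by proving two inequalities, $\chi(\mathscr{C})\le C$ and $\chi(\mathscr{C})\ge C$, and then to note that the "if and only if" characterization of valid colorings follows immediately from the tightness of both bounds. For the upper bound, I would simply assign the $C$ vertices the $C$ distinct colors $\{1,2,\dots,C\}$ in any bijective fashion; since any two distinct vertices then receive distinct colors, no edge is monochromatic, so this is a valid $C$-coloring and hence $\chi(\mathscr{C})\le C$. This direction is essentially immediate from Definition 1 and Definition 2.

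For the lower bound I would argue by contradiction. Suppose $\mathscr{C}$ is $K$-colorable with $K<C$. A coloring is a map from the $C$ vertices into a palette of size $K<C$, so by the pigeonhole principle at least two distinct vertices $x$ and $y$ receive the same color. But by Definition 5 every two distinct vertices of a clique are adjacent, so $(x,y)$ is an edge of $\mathscr{C}$ with both endpoints sharing a color, contradicting the defining requirement of a $K$-coloring that adjacent vertices differ in color. Hence no such $K<C$ exists, giving $\chi(\mathscr{C})\ge C$; combined with the upper bound, $\chi(\mathscr{C})=C$.

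Finally, for the "if and only if" statement: the "if" direction is exactly the $C$-coloring exhibited in the upper-bound argument. For "only if", suppose a conflict-free coloring of $\mathscr{C}$ uses colors from $\{1,2,\dots,C\}$ but is \emph{not} injective on the vertex set; then again two distinct vertices share a color, and adjacency of all vertex pairs in the clique forces a conflict, a contradiction. Therefore any conflict-free coloring must assign the $C$ vertices $C$ pairwise-distinct colors, i.e., it is a bijection onto $\{1,2,\dots,C\}$.

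I do not anticipate a genuine obstacle here — the result is the classical fact that the chromatic number of $K_C$ is $C$. The only place requiring a word of care is the footnoted remark that the lemma remains valid for most quasi-cliques; that claim is not part of the formal statement, but if one wished to support it one would observe that deleting a small number of edges from $K_C$ leaves a graph that still contains $K_{C-1}$ as a subgraph in typical cases, so its chromatic number stays at $C$ unless the deleted edges happen to form a structure (e.g., all edges incident to a single vertex) that drops it. Since the paper only needs the exact-clique statement for the algorithm's correctness, I would keep the proof confined to the clean $K_C$ argument above and leave the quasi-clique comment as an informal remark.
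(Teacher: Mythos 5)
Your proof is correct and is the standard argument the paper itself omits (it states Lemma \ref{lemma1} as the classical fact $\chi(K_C)=C$ without proof): an explicit bijective coloring gives $\chi(\mathscr{C})\le C$, and pigeonhole plus all-pairs adjacency gives $\chi(\mathscr{C})\ge C$ together with the injectivity ("only if") claim, so there is nothing to reconcile. One caution on your informal aside only: containing $K_{C-1}$ merely gives $\chi\ge C-1$, and in fact deleting even a single edge from $K_C$ already lowers the chromatic number to $C-1$ (assign the two nonadjacent vertices the same color), so the footnote's quasi-clique remark is best read as a practical statement about the permutation-based solution structure remaining (near-)necessary, not as $\chi$ being literally preserved --- but this lies outside the formal statement and does not affect your proof.
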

	
	\lmref{lemma1} and the CP result of $V(t)$ suggest that $C \geq N_{\rm at}$ is a necessary condition for system-level IM. In this paper, we set $C=N_{\rm at}$ to achieve excellent balance between accessibility and system capacity, which is validated in Section \ref{SecS}. 
	
	Based on Lemma 1, the Intra-GSI of the entire system can be completely eliminated by the coloring solution structure
	\begin{equation}
		(c_{s_i^1},c_{s_i^2},...,c_{s_i^{N_{\rm at}}})= \pi(1,2,...,N_{\rm at}), \quad \forall i \in \mathcal{Q},
		\label{struct}
	\end{equation}
	where $\pi(1,2,...,N_{\rm at})$ is an arbitrary permutation of $(1,2,...,N_{\rm at})$. Subsequently, considering the effectiveness of Tabu Search on GC \cite{tabu,head}, we further mitigate Inter-GSI by applying Tabu Search to the initial solutions generated by structure (\ref{struct}). 
	
	\begin{figure}[t]
		\centering
		\includegraphics[width=0.45\textwidth]{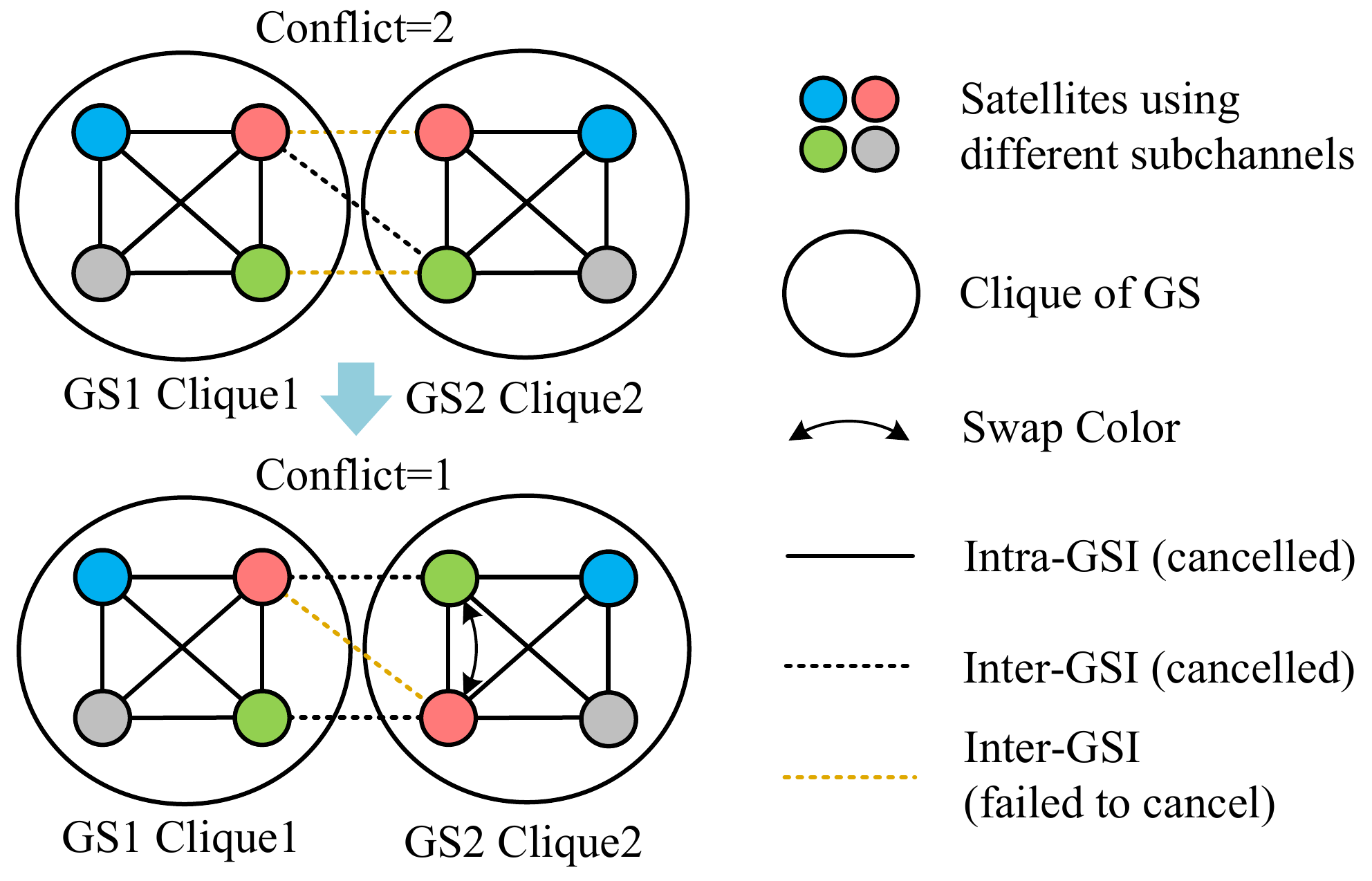}
		\caption{Illustration of Lemma 1, solution structure (\ref{struct}) and neighboring solution generation strategy (\ref{change}) of CTS algorithm. The red and green vertices of clique 2 are randomly selected and their colors are swapped to generate a neighboring solution, which reduces the number of Intra-GSI conflicts from 2 to 1.}
		\label{nei}
	\end{figure}
	
	Tabu Search is a metaheuristic local search algorithm that iteratively navigates through the solution space and avoids cyclic search by utilizing a tabu list \cite{tabu}. During each iteration of Tabu Search in our scenario, we first generate $N_{\rm n}$ new coloring schemes ${\bf C'} =({\bf c}^{(1)},{\bf c}^{(2)},...,{\bf c}^{(N_{\rm n})})$ (neighboring solutions) by swapping the colors of two vertices belonging to the same clique on the current best solution ${\bf c}^*$. Formally, if $q$ is a random GS and $x,y$ are two different random antennas, then a neighboring solution ${\bf c}^{(i)}$ can be written as
	
	\begin{equation}
		\begin{aligned}
			{ c}^{(i)}_s=
			\begin{cases}
				{ c}^*_{s_q^y}, & s= {s_q^x},\\
				{ c}^*_{s_q^x}, & s= {s_q^y},\\
				{ c}^*_{s}, &  {\rm otherwise},
			\end{cases}
			\forall s \in \mathcal{W}(t).
		\end{aligned}
		\label{change}
	\end{equation}
	This neighboring solution generation strategy ensures that sturcture \eqref{struct} is preserved throughout the searching process, which is illustrated in Fig. \ref{nei} using a simplified example. It should be noted that we only select vertices with conflicts as ${s_q^x}$, so that ineffective searches can be reduced. To avoid revisiting previously explored solutions, a tabu list $L=\{{\bf c}^*_{-1},{\bf c}^*_{-2},...,{\bf c}^*_{-l_{\rm tabu}}\}$ records the solutions of the last $l_{\rm tabu}$ steps at each iteration, where ${\bf c}^*_{-i}$ represents the best solution acquired $i$ iterations earlier. Then at the end of each iteration, ${\bf c}^*$ is updated as
	\begin{equation}
		{\bf c}^* = \underset {{\bf c} \in {\bf C'}, {\bf c} \notin L}{\operatorname {arg\,min} } f_{\rm con}({\bf c},t),
	\end{equation}
	and $L$ is updated as $\{{\bf c}^*,{\bf c}^*_{-1},...,{\bf c}^*_{-l_{\rm tabu}+1}\}$ accordingly. 
	
	Based on the above descriptions, CTS algorithm with two stages can be concluded: at stage 1, generate and test $N^{t}_{\rm in}$ random initial solutions according to sturcture \eqref{struct}, then select the $N^{t}_{\rm ca}$ best candidates among them; at stage 2, optimize these candidate solutions using Tabu Search and output the best solution with the lowest $f_{\rm LF}({\bf c},t)$.
	
	Similar to GG, both stages of CTS can be parallelized due to the independence of generating and testing each initial solution, as well as performing Tabu Search on different candidate solutions. Therefore, with at least $N^{t}_{\rm in}$ ($N^{t}_{\rm in}>N^{t}_{\rm ca}$) parallel executors, the parallel complexity of CTS can be reduced from $\mathcal{O}\left( N^{t}_{\rm in}|\mathcal{W}(t)|+N^{t}_{\rm ca}N_{\rm it}t_{\rm tabu}\right)$ to $\mathcal{O}\left( |\mathcal{W}(t)|+N_{\rm it}t_{\rm tabu}\right)$, where $t_{\rm tabu}$ represents  the complexity per iteration of Tabu Search. 
	
	\subsection{Time Continuous Frequency Allocation for Dynamic Constellation}\label{S-CFA}
	In dynamic SatCom systems, it is of great significance to ensure TCFA for feeder links, so that excessive additional signaling overhead of switching frequency can be avoided. Specifically, when a satellite is selected by the same GS for a consecutive period of time, the subchannel allocated to it should not change frequently, i.e., if $q_s^t=q_s^{t-1}$ holds, $c_s^t=c_s^{t-1}$ should also be maintained as much as possible, where superscript $(\cdot)^t$ conveys the meaning of ``at time slot $t$" in this subsection. For the convenience of expression, we define $m_{\rm c}(s,t)=1$ if $q_s^t=q_s^{t-1}$ and 0 otherwise.
	
	\begin{figure}[t]
		\centering
		\includegraphics[width=0.45\textwidth]{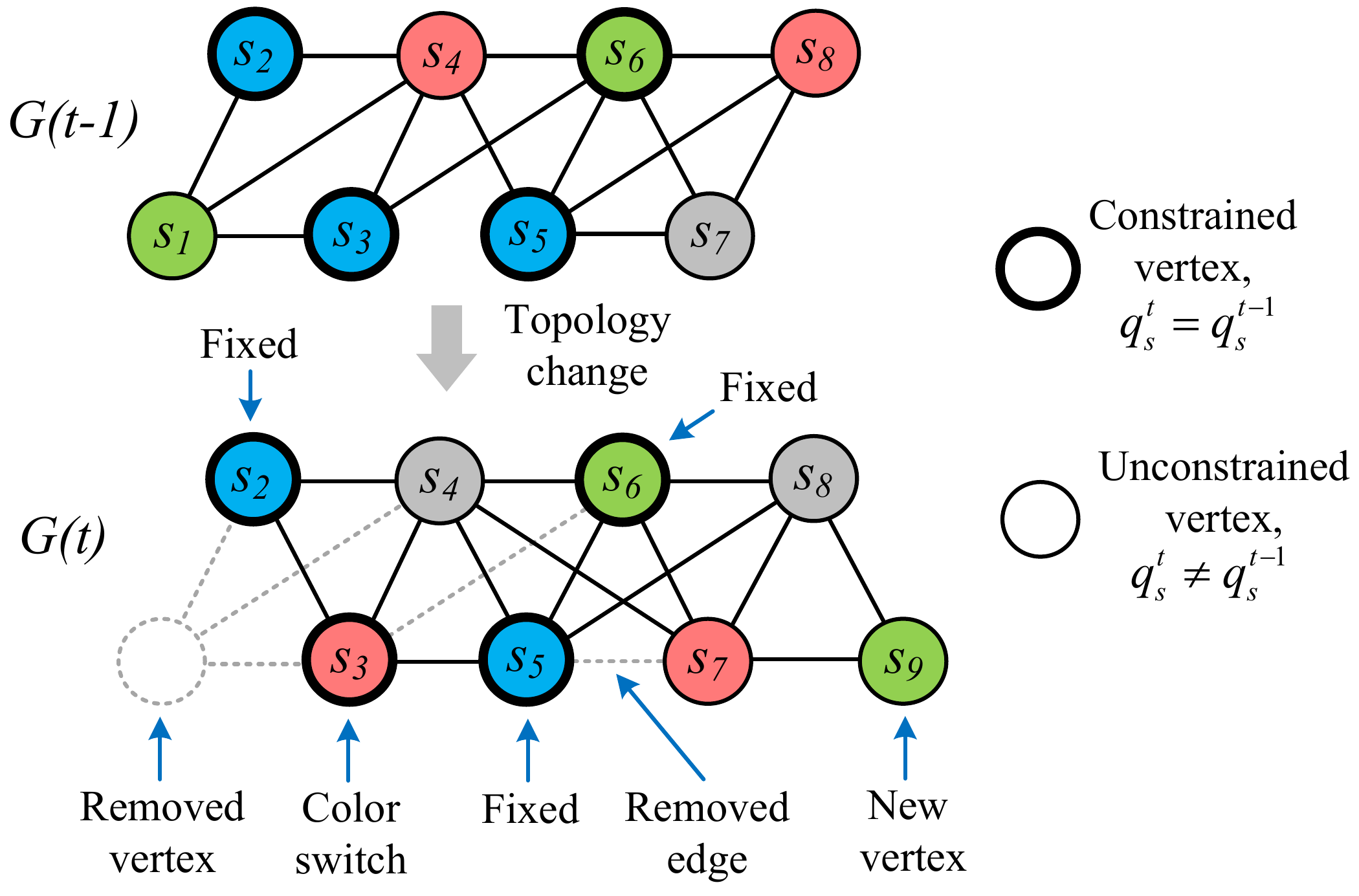}
		\caption{An Example of GC with low FSR. Due to the dynamic nature of LEO constellations, the topology of the interference graph is time-variant. When edges $(s_2,s_3)$ and $(s_3,s_5)$ emerge at time $t$, the color of constrained vertex $s_3$ needs to be switched with a high probability since it would otherwise cause 2 conflicts. The colors of other constrained vertices remain fixed, and other unconstrained vertices can adjust their colors freely to avoid conflicts. }
		\label{cfa}
	\end{figure}
	
	Reflecting TCFA on GC, the colors of a certain set of vertices $\mathcal{S}_{\rm c}(t)=\{s \mid q_s^t=q_s^{t-1} , s\in \mathcal{W}(t) \}$ on the current graph $G(t)$ are constrained by their previous colors in ${\bf c}^{t-1}$ on $G(t-1)$, which poses additional challenges to GC algorithm design. To this end, we further modify GG and CTS to achieve a low frequency switching rate (FSR) while maintaining high IM performance. The key idea of modification involves inheriting colors of all constrained  vertices (in $\mathcal{S}_{\rm c}(t)$) from ${\bf c}^{t-1}$ before GC, i.e., 
	\begin{equation}
		c_s^t=c_s^{t-1}, \forall s \in \mathcal{S}_{\rm c}(t),
		\label{inher}
	\end{equation}
	and switching their colors with a probability proportional to the associated switching benefits (conflict reduction) during GC, as explained by the example in Fig. \ref{cfa}. Specifically, let $\bf c$ and $\bf c'$ represent the coloring schemes before and after an operation that switches the colors of $k_{\rm c}$ constrained vertices and an arbitrary number of unconstrained vertices, then the probability of accepting this operation is given by
	\begin{equation}
		\begin{aligned}
			p({\bf c},{\bf c'}) = 
			\min\{ \Delta f_{\rm con} p_{\rm s}/(k_{\rm c}+\epsilon), 1 \}, 
		\end{aligned}
		\label{swicol}
	\end{equation}
	where $\Delta f_{\rm con} = \max\{f_{\rm con}({\bf c},t)-f_{\rm con}({\bf c'},t),0\} $ is the number of reduced conflicts, $\epsilon$ is a very small positive number, and $p_{\rm s}$ is the proportionality constant introduced to realize a controllable long-time average FSR for the system. On this basis, we refine GG and CTS as follows: 
	
	\textit{1) TCFA-GG:} First, inherit colors of constrained vertices according to \eqref{inher} and leave unconstrained vertices uncolored, rendering $G(t)$ partial-colored. Next, sort unconstrained vertices in descending order of their degrees and execute stage 1 and 2 of GG to color them. For stage 3, switch the colors of constrained vertices first according to strategy \eqref{GG} with the probability given by \eqref{swicol}, in which $k_{\rm c}=1$ since each operation only recolors one constrained vertex. Then, recolor other unconstrained vertices by strategy \eqref{GG}. TCFA-GG only switches the colors of a few constrained vertices that would otherwise cause severe interference, and the impact of fixing them is offset by recoloring other unconstrained vertices flexibly. The process of TCFA-GG is presented in \algref{alg1}.\footnote{TCFA-GG and TCFA-CTS degrades to GG and CTS when $\mathcal{S}_{\rm c}(t)=\emptyset$.} 
	
	\begin{algorithm}[t]
		\caption{TCFA-GG Algorithm}
		\begin{algorithmic}[1]
			\STATE \textbf{Input}: $G(t), N_{\rm GG}, C,  \mathcal{S}_{\rm c}(t)$, ${\bf c}^{t-1}$. 
			\STATE Define ${\bf c}^*$ to record the best solution.
			\STATE Calculate degree vector $\bf d$ on $G(t)$ by Definition 4.
			\STATE \textbf{for} $i=1,...,N_{\rm GG}$ \textbf{do} (in parallel)
			\STATE \quad Initialize ${\bf c}^{t}={\bf 0}$.
			\STATE \quad $c_s^t=c_s^{t-1}, \forall s \in \mathcal{S}_{\rm c}(t).$
			\STATE \quad \textbf{Stage 1:}
			\STATE \quad $\bf d'=\bf d+\bf n$.
			
			\STATE \quad \textbf{for} $s \in V(t) \setminus \mathcal{S}_{\rm c}(t)$ in descending order of $d'(s)$ \textbf{do}
			\STATE \quad \quad \textbf{if} $	\underset {c \in \mathcal{C}}{\operatorname {\min}   }  \sum_{u\in \mathcal{N} (s)} \delta(c_u-c)=0$ \textbf{do}
			\STATE \quad \quad \quad $	c_s^t= \underset {c \in \mathcal{C}}{\operatorname {arg\,min}   }  \sum_{u\in \mathcal{N} (s)} \delta(c_u-c)$.
			\STATE \quad \quad \textbf{end if} 
			\STATE \quad\textbf{end for}
			
			\STATE \quad \textbf{Stage 2:}
			\STATE \quad \textbf{for} $s \in V(t) \setminus \mathcal{S}_{\rm c}(t)$ in descending order of $d'(s)$ \textbf{do}
			
			\STATE \quad \quad \textbf{if} $	c_s^t= 0$ \textbf{do}
			\STATE \quad \quad \quad $	c_s^t= \underset {c \in \mathcal{C}}{\operatorname {arg\,min}   }  \sum_{u\in \mathcal{N} (s)} \delta(c_u-c)$.
			\STATE \quad \quad \textbf{end if}
			\STATE \quad \textbf{end for}
			
			\STATE \quad \textbf{Stage 3:}
			\STATE \quad \textbf{for} $s \in \mathcal{S}_{\rm c}(t)$ in ascending order of $d'(s)$ \textbf{do}
			\STATE \quad \quad Do $c_s^t= \underset {c \in \mathcal{C}}{\operatorname {arg\,min}   }  \sum_{u\in \mathcal{N} (s)} \delta(c_u-c)$ with the \\ \quad \quad probability given by \eqref{swicol}, in which $k_c=1$.
			\STATE \quad \textbf{end for}
			
			\STATE \quad \textbf{for} $s \in V(t) \setminus \mathcal{S}_{\rm c}(t)$ in ascending order of $d'(s)$ \textbf{do}
			\STATE \quad \quad $	c_s^t= \underset {c \in \mathcal{C}}{\operatorname {arg\,min}   }  \sum_{u\in \mathcal{N} (s)} \delta(c_u-c)$.
			\STATE \quad \textbf{end for}
			\STATE \quad \textbf{if} $f_{\rm LF}({\bf c}^t,t) < f_{\rm LF}({\bf c}^*,t)$ \textbf{do}
			\STATE \quad \quad ${\bf c}^* = {\bf c}^t$.
			\STATE \quad \textbf{end if}
			\STATE \textbf{end for}
			\STATE ${\bf c}^t={\bf c}^*$.
			\STATE \textbf{Output}: ${\bf c}^t$.
		\end{algorithmic}
		\label{alg1}
	\end{algorithm}
	
	\textit{2) TCFA-CTS:} Similar to TCFA-GG, the first step of TCFA-CTS is also color inheritance of constrained vertices. Then, for generating initial solutions at stage 1 of CTS, the colors of constrained vertices are fixed, while those of unconstrained vertices in the same clique can be permuted freely. In stage 2, conflicts of constrained and unconstrained vertices are reduced together. Specially, a neighboring solution generated by swapping the colors of $s_q^x$ and $s_q^y$ is saved into ${\bf C'}$ with the probability determined by \eqref{swicol}, in which $k_c=m_c(s_q^x,t)+m_c(s_q^y,t)$.
	New neighboring solutions will continue to be generated until their number reaches $N_{\rm n}$ for each iteration of Tabu Search. The process of TCFA-CTS is summarized in \algref{alg2}.
	
	\begin{algorithm}[t]
		\caption{TCFA-CTS Algorithm}
		\begin{algorithmic}[1]
			\STATE \textbf{Input}: $G(t)$, $N^{t}_{\rm in}$, $N^{t}_{\rm ca}$, $N_{\rm it}$, $N_{\rm n}$, $C$, $\mathcal{S}_{\rm c}(t)$, ${\bf c}^{t-1}$.
			
			\STATE \textbf{Stage 1:}
			\STATE \textbf{for} $i=1,...,N^{t}_{\rm in}$ \textbf{do} (in parallel)
			\STATE \quad Initialize ${\bf c}={\bf 0}$.
			\STATE \quad $c_s=c_s^{t-1}, \forall s \in \mathcal{S}_{\rm c}(t).$
			\STATE \quad \textbf{for} $q=1,...,Q$ \textbf{do}
			\STATE \quad \quad $\mathcal{P}=\mathcal{C} \setminus \{c \mid c \in (c_{s_q^1},c_{s_q^2},...,c_{s_q^C})\}$.
			\STATE \quad \quad Generate $\bf p$ by randomly permuting elements in $\mathcal{P}$.
			\STATE \quad \quad Insert elements of vector $\bf p$ sequentially into positions \\ \quad \quad of zero elements in $(c_{s_q^1},c_{s_q^2},...,c_{s_q^C})$.
			\STATE \quad \textbf{end for}
			\STATE \quad Let the $i$-th initial solution ${\bf c}_{i}={\bf c}$.
			\STATE \quad Calculate $f_{\rm con}({\bf c}_{i},t)$.
			\STATE \textbf{end for}
			\STATE Select $N^{t}_{\rm ca}$ initial solutions with the smallest $f_{\rm con}({\bf c}_i,t)$ and  record them as $(\hat{{\bf c}}_{1},...,\hat{{\bf c}}_{N^{t}_{\rm ca}})$. 
			
			\STATE \textbf{Stage 2:}
			\STATE \textbf{for} $i'=1,...,N^{t}_{\rm ca}$ \textbf{do} (in parallel)
			\STATE \quad ${\bf c}^*=\hat{{\bf c}}_{i'}$. 
			\STATE \quad Initialize $L$ as empty tabu list and  counter $j=0$.
			\STATE \quad \textbf{for} ${\rm it}=1,...,N_{\rm it}$ \textbf{do}
			\STATE \quad \quad \textbf{while} $j<N_{\rm n}$ \textbf{do}
			\STATE \quad \quad \quad Randomly select a vertex $s_q^x$ with conflicts and $s_q^y$. 
			
			\STATE \quad \quad \quad Swap the colors of them by \eqref{change} to generate ${\bf c}^{(j)}$.
			\STATE \quad \quad \quad 	$k_c=m_c(s_q^x,t)+m_c(s_q^y,t)$.
			\STATE \quad \quad \quad With the probability according to \eqref{swicol}, save ${\bf c}^{(j)}$  \\ \quad \quad \quad into ${\bf C'}$ and do $j=j+1$. 
			
			\STATE \quad \quad \textbf{end while}
			
			\STATE \quad \quad ${\bf c}^* = \underset {{\bf c} \in {\bf C'}, {\bf c} \notin L}{\operatorname {arg\,min} } f_{\rm con}({\bf c},t)$.
			\STATE \quad \quad Update $L$.
			\STATE \quad \textbf{end for}
			\STATE \quad \textbf{if} $f_{\rm LF}({\bf c}^*,t) < f_{\rm LF}({\bf c}^t,t)$ \textbf{do}
			\STATE \quad \quad ${\bf c}^t = {\bf c}^*$.
			\STATE \quad \textbf{end if}
			\STATE \textbf{end for}
			\STATE \textbf{Output}: ${\bf c}^t$.
		\end{algorithmic}
		\label{alg2}
	\end{algorithm}
	
	\subsection{Complexity Reduction by Mega Constellation Decomposition}\label{S-CD}
	In order to further alleviate the complexity burden of SatCom operators, we propose two mega constellation decomposition methods in this subsection, which adapt to different system scales. By dividing $G(t)$ into small subgraphs with low or no correlation, GC algorithms can be executed on them in parallel, thereby reducing the overall time complexity to that on the largest subgraph. Two decomposition methods are detailed as follows.
	
	\textit{1)} \textit{Connected Component Decomposition (CCD):} Since interference between distant GSs are weak \cite{gs1}, $G(t)$ of the entire system is unconnected so that it can be decomposed into several connected components (CC) by a simple Depth-First Search (DFS) algorithm as
	\begin{equation}
		G(t)=\{G^{\rm c}_1(t),G^{\rm c}_2(t),...,G^{\rm c}_A(t)\},
	\end{equation}
	where $A$ is the number of CCs. 
	
	CCD is lossless for GC since no edges are deleted, and GC on different CCs are independent. However, the effect of CCD diminishes as $N_{\rm at}$ grows since stronger interference increases the scale of the largest CC. This calls for  decomposition inside huge CCs to maintain the effect of complexity reduction.
	
	\textit{2)} \textit{Gateway Station Clustering Decomposition (GSCD):} 
	Inspired by the CP of $V(t)$ in CTS algorithm, we propose the gateway station clustering decompositon method, which is a generalization of CCD.
	
	GSCD categorize GSs into clusters by the K-means algorithm \cite{kmeans}, using the Earth-centered, Earth-fixed (ECEF) coordinates of GSs as features and the within-cluster sum of squares (WCSS) minimization principle as the criterion. 
	Subsequently, all satellites selected by a cluster of GSs form a subgraph, so that $G(t)$ is partitioned as 
	\begin{equation}
		\begin{aligned}
			&V(t)=V^{\rm p}_1(t) \cup V^{\rm p}_2(t) \cup... \cup V^{\rm p}_{A_1}(t),\\
			&E(t)=E^{\rm p}_1(t) \cup E^{\rm p}_2(t) \cup...\cup E^{\rm p}_{A_1}(t)\cup D^{\rm p}(t),
		\end{aligned}
		\label{gp}
	\end{equation}
	where $A_1$ is the number of GS clusters, and
	\begin{equation}
		\begin{aligned}
			&V^{\rm p}_i(t)=\cup_{q\in {\mathcal{Q}^{\rm c}_i}}\mathscr{C}_q(t),\\
			&E^{\rm p}_i(t)=\{(u,s)|u\in V^{\rm p}_i(t),s\in V^{\rm p}_i(t)\},\\
			&D^{\rm p}(t)=E(t) \setminus \left(E^{\rm p}_1(t) \cup E^{\rm p}_2(t) \cup...\cup E^{\rm p}_{A_1}(t)\right),
		\end{aligned}
	\end{equation}
	where $\mathcal{Q}^{\rm c}_i$ is the GS index set representing the $i$-th cluster.
	After GSCD, GC algorithm is parallelized on each subgraph $G^{\rm p}_i(t)=\{V^{\rm p}_i(t),E^{\rm p}_i(t)\}$. 
	
	However, the process of GSCD leads to information loss due to edge removal, hence end vertices of deleted edges belonging to $D^{\rm p}(t)$ need to be recolored for compensation. For GG, these vertices are recolored according to strategy (\ref{GG}), and the recoloring order can be arbitrary since the information of degree is damaged after GSCD. For CTS, recover $G(t)$ and execute the Tabu Search process on the parallel coloring result for several iterations until convergence. 
	
	\section{System Capacity Maximization}\label{SecR}
	\vspace{-1mm}	
	
	In the {previous sections}, we focus on problem P1 and P2 based on the ITU regulation of aggregate I/N. In this section, we further investigate the IM goal of maximizing system capacity, which is another important metric of SatCom systems \cite{worry,densesky,newspace}. {First we explain that} problem P1 and system capacity maximization share a highly aligned objective. {Then we further enhance} spectrum efficiency by vacant subchannel utilization. 
	
	\vspace{-1mm}
	\subsection{Relationship between Graph Coloring-Based Interference mitigation and System Capacity Maximization}\label{S-scm}
	Apart from the I/N metric studied in previous sections, carrier to noise ratio (C/N) and carrier to interference plus noise ratio (SINR) are also important metrics to evaluate the quality of a feeder link, which are defined as
	\begin{equation}
		\label{CtoN}
		(C/N)_{s}=\frac{P_{s}^{\rm tr}}{\kappa T_{\rm n}B}, \quad
		{\rm SINR}_{s}=\frac{P_{s}^{\rm tr}}{\hat{I}_{s}+\kappa T_{\rm n}B},
	\end{equation}
	for satellite $s$, respectively. As a result, the link capacity $R_{s}$ is calculated as
	\begin{equation}
		\begin{aligned}
			R_{s}\!=\!B\log_2(1+{\rm SINR}_{s}),
		\end{aligned}
	\end{equation}
	and the theoretical upper bound of $R_{s}$ is the capacity under no interference, i.e., 
	\begin{equation}
		\begin{aligned}
			\overline{R}_{s}\!=\!B\log_2\left(1+(C/N)_{s}\right).
		\end{aligned}
	\end{equation}
	Similar to \cite{risk,newspace,s1432,densesky}, link capacity degradation $\Delta R_{s}$ is defined as
	\begin{equation}
		\label{deltar}
		\begin{aligned}
			\Delta R_{s}=1-\frac{R_{s}}{\overline{R}_{s}}=1-\frac{\log_2(1+{\rm SINR}_{s})}{\log_2(1+(C/N)_{s})},
		\end{aligned}
	\end{equation}
	and system capacity degradation $\Delta \hat{R}$ is defined  as
	\begin{equation}
		\Delta \hat{R}= 1- \frac{\sum_{s  \in \mathcal{W}(t)} R_{s}}{\sum_{s  \in \mathcal{W}(t)} \overline{R}_{s}} .
	\end{equation}
	
	\begin{ppn}
		When the feeder link of satellite $s$ meets the ITU regulation and ${\rm SINR}_{s}\!>\!1$, the following inequality holds:
		\begin{equation}
			\label{Rleq}
			\Delta R_{s} \leq \log_{(C/N)_{s}}(1+I_{\rm th}^{\rm R}).
		\end{equation}
	\end{ppn}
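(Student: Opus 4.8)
The plan is to trace $\Delta R_s$ back to the aggregate I/N through the identity ${\rm SINR}_s=(C/N)_s/\bigl(1+(\hat I/N)_s\bigr)$, and then apply the ITU bound $(\hat I/N)_s\le I_{\rm th}^{\rm R}$ together with the hypothesis ${\rm SINR}_s>1$. First I would rewrite the definitions in \eqref{CtoN} using the noise power $N=\kappa T_{\rm n}B$, obtaining ${\rm SINR}_s=P_s^{\rm tr}/(\hat I_s+N)=(C/N)_s/\bigl(1+(\hat I/N)_s\bigr)$; the ITU regulation $(\hat I/N)_s\le I_{\rm th}^{\rm R}$ then yields the lower bound ${\rm SINR}_s\ge (C/N)_s/(1+I_{\rm th}^{\rm R})$. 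I would also record two positivity facts needed later: since $(\hat I/N)_s\ge 0$ we have ${\rm SINR}_s\le (C/N)_s$, so the hypothesis ${\rm SINR}_s>1$ forces $(C/N)_s>1$ (hence $\log_2(C/N)_s>0$), and $I_{\rm th}^{\rm R}>0$ in linear scale, so $1+I_{\rm th}^{\rm R}>1$.

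Next I would substitute the SINR lower bound into the definition \eqref{deltar}. Writing $a:=(C/N)_s>1$ and $b:=1+I_{\rm th}^{\rm R}>1$, monotonicity of $\log_2$ together with $\log_2(1+a)>0$ gives
\begin{equation}
\Delta R_s=1-\frac{\log_2(1+{\rm SINR}_s)}{\log_2(1+a)}\le 1-\frac{\log_2\!\bigl(1+a/b\bigr)}{\log_2(1+a)}=\frac{\log_2\!\bigl(\tfrac{1+a}{1+a/b}\bigr)}{\log_2(1+a)}.
\end{equation}
The only computational step is the elementary inequality $\tfrac{1+a}{1+a/b}=\tfrac{b(1+a)}{a+b}\le b$, which after cancelling $b>0$ is just $1+a\le a+b$, i.e. $b\ge 1$. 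This yields $\Delta R_s\le \log_2 b/\log_2(1+a)$.

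Finally, from $1+a>a>1$ we get $\log_2(1+a)\ge\log_2 a>0$, so (as $\log_2 b>0$) $\log_2 b/\log_2(1+a)\le \log_2 b/\log_2 a=\log_a b$, and by change of base this equals $\log_{(C/N)_s}(1+I_{\rm th}^{\rm R})$, which is exactly \eqref{Rleq}. I expect the main obstacle to be purely bookkeeping rather than computation: one must establish $(C/N)_s>1$ from ${\rm SINR}_s>1$ \emph{before} dividing by $\log_2(C/N)_s$, and keep $1+I_{\rm th}^{\rm R}>1$ in view so that every use of monotonicity of $\log$ and every division by a logarithm preserves the inequality direction. The mathematical content reduces to the one-line bound $1+a\le a+b$.
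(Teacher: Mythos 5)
Your proof is correct, and while it opens the same way as the paper's (rewriting the ITU constraint as $(\hat I/N)_s\le I_{\rm th}^{\rm R}$ to get ${\rm SINR}_s\ge (C/N)_s/(1+I_{\rm th}^{\rm R})$, exactly the paper's inequality \eqref{21}), the core step is genuinely different. The paper bridges from $\Delta R_s$ to the ``$+1$-free'' ratio by proving $\delta_R=\frac{\log_2(1+{\rm SINR}_s)}{\log_2(1+(C/N)_s)}-\frac{\log_2{\rm SINR}_s}{\log_2(C/N)_s}>0$ via an auxiliary function $f(x)=\ln x/\ln(\gamma_R x)$ in Appendix \ref{Apenproof}, and only then applies the SINR lower bound; the hypothesis ${\rm SINR}_s>1$ is what makes that lemma work. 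You instead substitute the SINR bound directly into \eqref{deltar} and reduce everything to the elementary algebra $\frac{1+a}{1+a/b}=\frac{b(1+a)}{a+b}\le b$ (i.e.\ $b\ge 1$), followed by $\log_2(1+a)\ge\log_2 a$, using ${\rm SINR}_s>1$ only to secure $(C/N)_s>1$ so the final division is legitimate. This buys a shorter, self-contained argument with no appendix and no calculus-style comparison, and in fact produces the slightly tighter intermediate bound $\Delta R_s\le \log_2(1+I_{\rm th}^{\rm R})/\log_2\bigl(1+(C/N)_s\bigr)$ before relaxing to the stated $\log_{(C/N)_s}(1+I_{\rm th}^{\rm R})$; what the paper's route buys is the standalone monotonicity fact $\delta_R>0$ and the interpretation of $\Delta R_s$ through the interference-free log ratio, which it reuses rhetorically when arguing the bound is small. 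Your bookkeeping of signs ($\log_2 b>0$ because $I_{\rm th}^{\rm R}>0$ in linear scale, and $(C/N)_s\ge{\rm SINR}_s>1$) is exactly the care needed, so there is no gap.
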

	\begin{proof}
		Rewrite the regulation of ITU as $\hat{I}_{s}+N \leq (1+I_{\rm th}^{\rm R})N$ for satellite $s$ and substitute it into (\ref{CtoN}), we have
		\begin{equation}
			\label{21}
			{\rm SINR}_{s} \geq \frac{(C/N)_{s}}{1+I_{\rm th}^{\rm R}}.
		\end{equation}
		When ${\rm SINR}_{s}>1$ is satisfied\footnote{This condition can be easily satisfied in modern SatCom systems after IM.}, the following equation holds: 
		\begin{align}
			\delta_R = \frac{\log_2(1+{\rm SINR}_{s})}{\log_2(1+(C/N)_{s})} - \frac{\log_2{\rm SINR}_{s}}{\log_2\left(C/N\right)_{s}}>0,
			\label{dr}
		\end{align}
		which is proved in Appendix \ref{Apenproof}. As a result, we can obtain 
		\begin{equation}
			\begin{aligned}
				\Delta R_{s} 
				&= 1-\frac{\log_2{\rm SINR}_{s}}{\log_2(C/N)_{s}}-\delta_R \\
				&\leq 1- \frac{\log_2(C/N)_{s}-\log_2(1+I_{\rm th}^{\rm R})}{\log_2(C/N)_{s}} \\
				&= \log_{(C/N)_{s}}(1+I_{\rm th}^{\rm R}).
			\end{aligned}
		\end{equation}
		This completes the proof. 	
	\end{proof}
	
	It's noted that $\log_{(C/N)_{s}}(1+I_{\rm th}^{\rm R})$ is a very small value, which indicates that the capacity $R_s$ almost reaches $\overline{R}_{s}$. Therefore, after solving P1 by GC, $\Delta R_s$ of most feeder links are negligible, thereby yielding a small $\Delta \hat{R}$ for the whole system. In other words, GC based IM and system capacity maximization are ultimately aligned in their outcomes.
	
	\vspace{-2mm}
	\subsection{Vacant Subchannel Utilization }\label{S-uvs}
	\vspace{-1mm}
	All the above discussions are based on the assumption that each feeder link occupies exactly one subchannel. However, with the increase of $N_{\rm at}$, some MAGSs will select satellites with larger elevation angle or even leave some antennas vacant. As a result, the number of subchannels needed for IM of these MAGSs can be smaller than $N_{\rm at}$, leading to a waste of frequency resources. Therefore, we propose a vacant subchannel utilization (VSU) method, which reallocates vacant subchannels in the system to some feeder links to double their transmission bandwidth, thereby further improving system capacity.
	
	Based on a subchannel allocation scheme ${\bf c}$ obtained by GC algorithms in the preceding sections, we introduce VSU scheme ${\bf c}^{\rm r}= [c^{\rm r}_{s_1},c^{\rm r}_{s_2},...,c^{\rm r}_{s_{|\mathcal{W}(t)|}}]$, where each element  $c^{\rm r}_{s_i} \in \{0,1,2,...,C\}$ is the vacant subchannel reallocated to satellite $s_i$. In particular, $c^{\rm r}_{s_i}=0$ stands for no subchannel reuse. The goal of VSU is to assign as many vacant subchannels to feeder links as possible without causing more interference, which can be formulated as 
	\begin{equation}
		\begin{aligned}
			\text{P3:} \quad  \max_{{\bf c}^{\rm r}} \ 
			&f_3(t)= \sum_{s  \in \mathcal{W}(t)} U(c^{\rm r}_{s}-\epsilon)\\
			\text{s.t.} \quad
			& c^{\rm r}_{s} \in \{0,1,...,C\}, \quad \forall s \in \mathcal{W}(t), \\
			& c^{\rm r}_{s} \neq c_u, \quad \forall u \in \mathcal{N}(s)\cup\{s\}, \forall s \in \mathcal{W}(t), \\
			& c^{\rm r}_{s} \neq c^{\rm r}_{u}, \quad \forall u \in \mathcal{N}(s), \forall s \in \mathcal{W}(t). \\
		\end{aligned}
	\end{equation}
	
	P3 is a list coloring problem \cite{list} in which each vertex has its own feasible color set and the goal is to color all the vertices without conflicts under feasible color constraints. Unlike the conflict minimization GC problem P2, if coloring vertex $s$ results in conflicts here, $s$ is left uncolored (dyed by color 0) since the regulation of ITU has higher priority. It is observed that $c^{\rm r}_{s}$ of each vertex $s$ is constrained by the colors of itself and all its neighbors, so its feasible color set is  
	\begin{equation}
		\mathscr{L}_s = \mathcal{C} \setminus \biggl( \bigcup\limits_{u \in \mathcal{N}(s) \cup \{s\}} c_u \biggr).
	\end{equation} 
	Similar to TCFA-GG, P3 can be efficiently solved by implementing \textit{Global} on a partial-colored graph  generated as follows: First, form a reuse graph $G'(t)$  by duplicating $G(t)$, and rename its vertices as $s^{\rm r}_1, s^{\rm r}_2,...,s^{\rm r}_{s_{|\mathcal{W}(t)|}}$. Then, inherit the colors of all vertices on $G(t)$ and fix them. Finally, add edges $(s,s^{\rm r}), \forall s \in \mathcal{W}(t)$, and $(u,s^{\rm r})$ if $e_{u,s}(t)=1,\forall u,s \in V(t)$.

	\section{Simulation Results and Performance Analysis}\label{SecS}
	
	\begin{figure*}[t]
		\centering
		\subfloat[]{
			\includegraphics[scale=0.25]{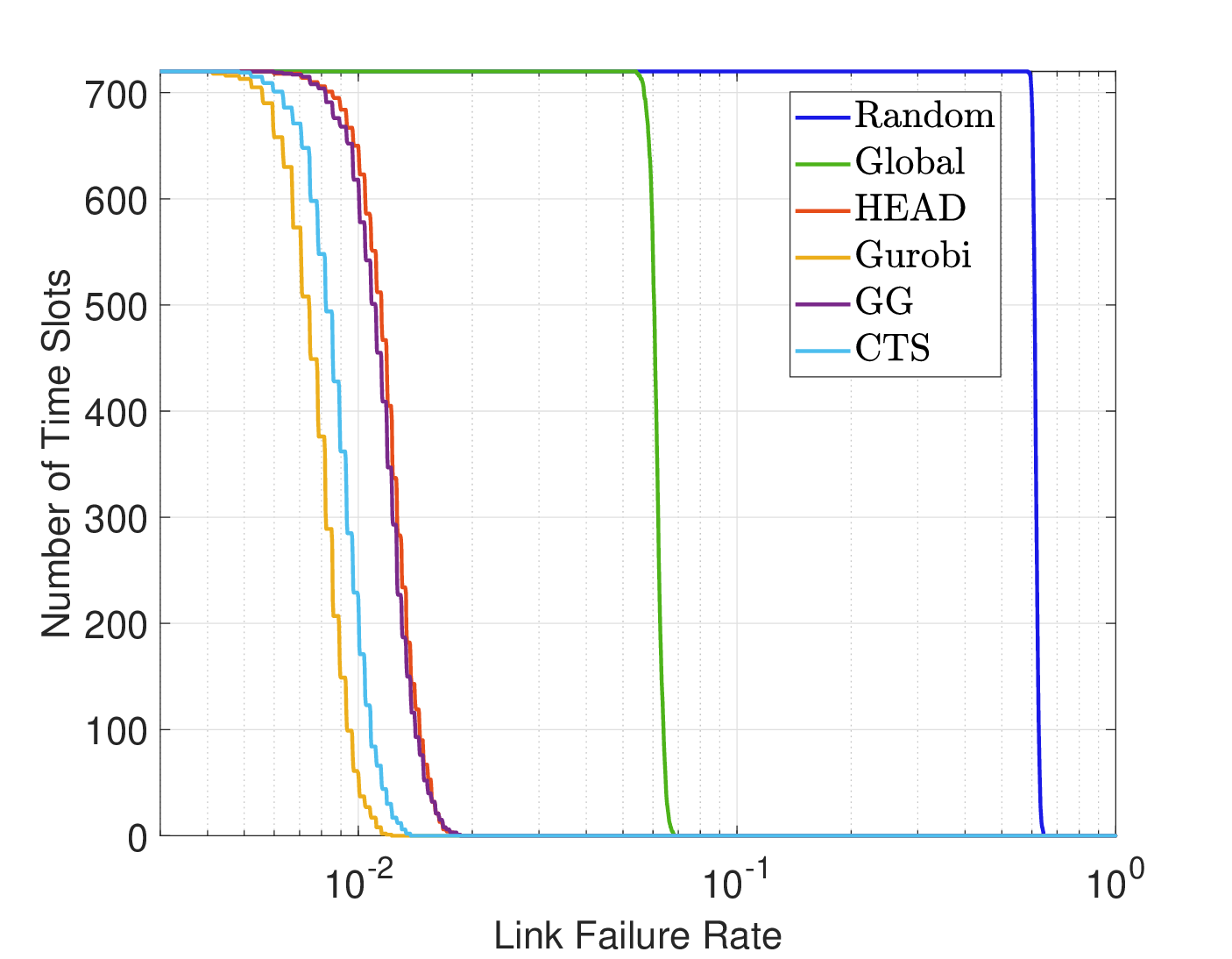}}
		\subfloat[]{
			\includegraphics[scale=0.25]{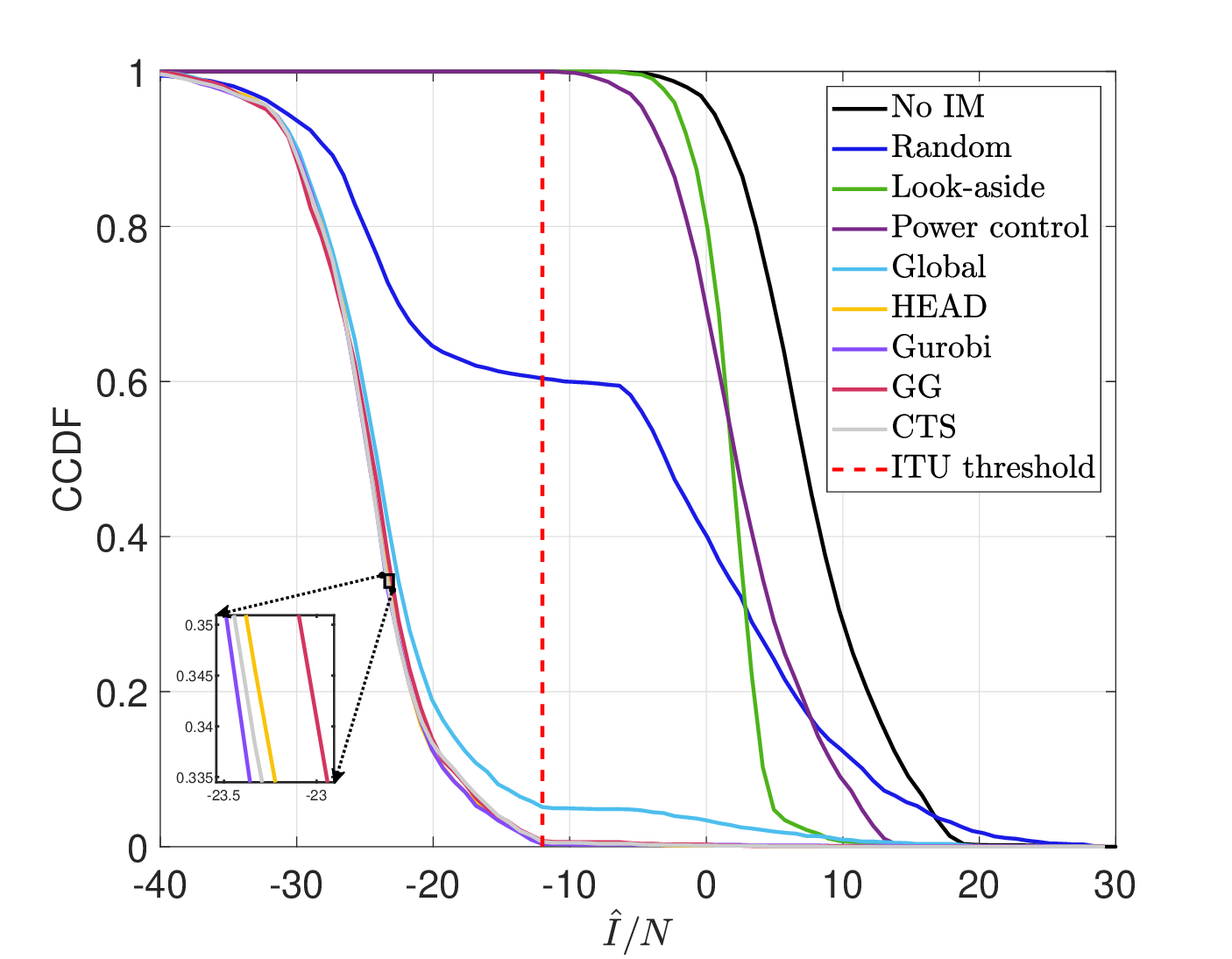}}
		\subfloat[]{
			\includegraphics[scale=0.25]{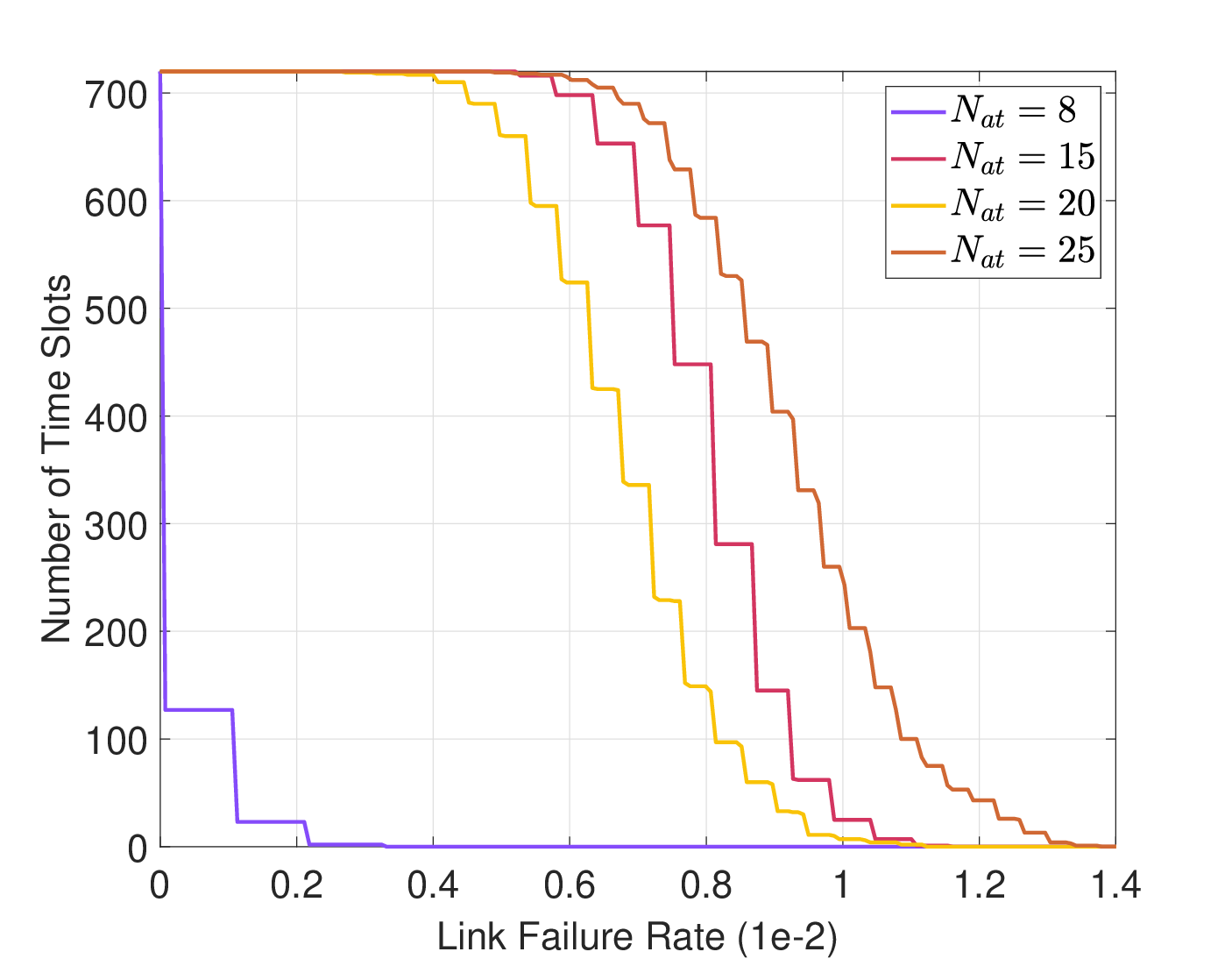}}
		\caption{Link failure rate performance. (a) Time distributions of LF rate for different IM methods, $N_{\rm at}=25$. (b) CCDF of aggregate I/N after IM for different IM methods, $N_{\rm at}=25$. (c) Time distributions of LF rate for different $N_{\rm at}$. }
		\label{lf}
		\vspace{-10pt}
	\end{figure*}
	
	\begin{figure*}[t]
		\centering
		\includegraphics[width=0.90\textwidth]{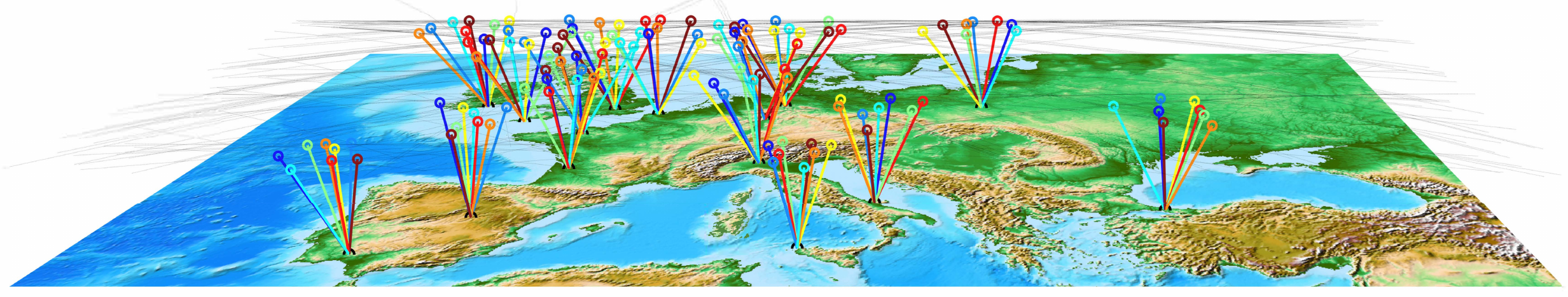}
		\caption{Visualization of subchannel allocation results obtained by GC-based IM.}
		\label{ksh}
	\end{figure*}

	\subsection{Simulation Setup}
	In this section, we {set up simulations} on the Starlink's first and second generation (SX1, SX2) mega LEO system to evaluate the performance of the proposed GC-based IM methodology.\footnote{For constellation coexistence, the proposed methodology is also effective under information sharing and cooperation between operators \cite{densesky}.} The system consists of 34396 satellites in 14 different orbital shells \cite{gen2starlink} and 115 MAGSs currently in operation  \cite{esstarlink}. Please refer to \cite{gen2starlink} for detailed orbital parameters. The ephemeris of the constellation is generated based on orbital parameters over a 2-hour period and sampled at an interval of 10 seconds, hence the number of time slots is $T_{\rm slot}=720$. Values of important system parameters are summarized in Table \ref{paras}. As illustrated in Section \ref{S-CTS}, the number of subchannels $C$ is set to $N_{\rm at}$ to achieve system-level IM, which will be validated later in Fig. \ref{M}. Without loss of generality, we also make worst-case assumption that all the satellites transmit signals at their maximum power.
	
	\renewcommand{\arraystretch}{1.3}
	\begin{table}[t]
		\centering 
		\small
		\caption{System Parameters}
		
		\begin{tabular}{ccc}
			\toprule
			\textbf{System parameters} & \textbf{Values} \\
			\midrule
			The number of satellites $S$ & 34396 \\
			The number of gateway stations $Q$ & 115 \\
			Satellite orbital altitude & 340-614km\\
			Peak gain of satellite antenna $G_{\rm s}(0)$ & 35dB \cite{s1528}\\
			Transmission power of satellite $P_{s}^{\rm tr}$ & 12dBW \cite{spacexant}\\
			Peak gain of gateway station antenna $G_{\rm g}(0)$ & 45.76dB \cite{s1428}\\
			Noise temperature of gateway station antenna $T_{\rm n}$ &  398K \cite{spacexant}\\
			Downlink carrier frequency $f_c$ & 20GHz \cite{schedules}\\
			Downlink bandwidth $B_{\rm w}$ & 500MHz \cite{schedules}\\
			Elevation angle threshold $\theta_{\rm thr}$ & $40^{\circ}$ \cite{spacexant}\\
			\bottomrule
		\end{tabular}
		\label{paras}
	\end{table}

	\begin{table}[t]
		\centering 
		\footnotesize
		\caption{Performance Comparison of Different Algorithms under $N_{\rm at}=25$.}
		
		\begin{tabular}{p{1.7cm}<{\centering}
				p{1.7cm}<{\centering}
				p{1.7cm}<{\centering}
				p{1.7cm}<{\centering}
			}
			\toprule	
			\textbf{Algorithms} & \textbf{Average LF rate (1e-2)} &  \textbf{Execution time (ms)} & \textbf{Memory (MB)}\\
			
			\midrule
			Random & 61.63 & $\approx 0$ & 0.02 \\ 
			Global & 6.14 & 2 & 4.4 \\ 
			HEAD & 1.20 & 445 & 47 \\ 
			Gurobi & 0.77 & $10^5$ & $8 \times 10^3$ \\
			GG & 1.16 & 6 & 5.4 \\ 
			CTS  & 0.89 & 20 & 4.7 \\ 
			
			\bottomrule
		\end{tabular}
		\label{algs}
	\end{table}
	
	In this section, we compare the following IM schemes:
	
	\begin{itemize}
		\item \textbf{Conventional IM techniques}: Three conventional IM techniques analyzed in \cite{densesky}. Random subchannel allocation (`\textbf{Random}') is the simplest frequency domain IM method, in which each feeder link uses a random subchannel to communicate. `\textbf{Look-aside}' alters satellite selection to avoid extremely small link angles. `\textbf{Power Control}' decreases satellite transmission power to reduce interference.
		\item `\textbf{Global}': \textit{Global} algorithm with $C$ colors. If conflicts occur for a vertex, it is assigned a random color.
		\item `\textbf{HEAD}': Hybrid Evolutionary Algorithm in Duet (HEAD) is one of the most powerful universal GC algorithms, outperforming others on the vast majority of benchmark graphs \cite{head}.
		\item `\textbf{Gurobi}': Gurobi is a renowned commercial solver which can solve the INLP problem P1 directly \cite{gurobi}.
		\item `\textbf{GG}' and `\textbf{TCFA-GG}': Generalized Global and its TCFA modification proposed in Section \ref{S-GG} and \ref{S-CFA}.
		\item `\textbf{CTS}' and `\textbf{TCFA-CTS}': Clique-Based Tabu Search and its TCFA modification proposed in Section \ref{S-CTS} and \ref{S-CFA}.
	\end{itemize}
	
	In addition, the algorithm parameters used in our simulations are: $I_{\rm th}^{\Gamma}=-13$dB for graph construction, $N_{\rm GG}=100$ for GG and \textit{Global}, and  $N^{t}_{\rm in}=2000$, $N^{t}_{\rm ca}=500$, $N_{\rm it}=250$, $N_{\rm n}=10$ for CTS. All simulations run on a computer with the 12th Gen Intel(R) Core(TM) i7-12700 processor and 32GB of RAM.
	
	\begin{figure*}
		\begin{minipage}{0.33\textwidth}
			\centering
			\includegraphics[width=\textwidth]{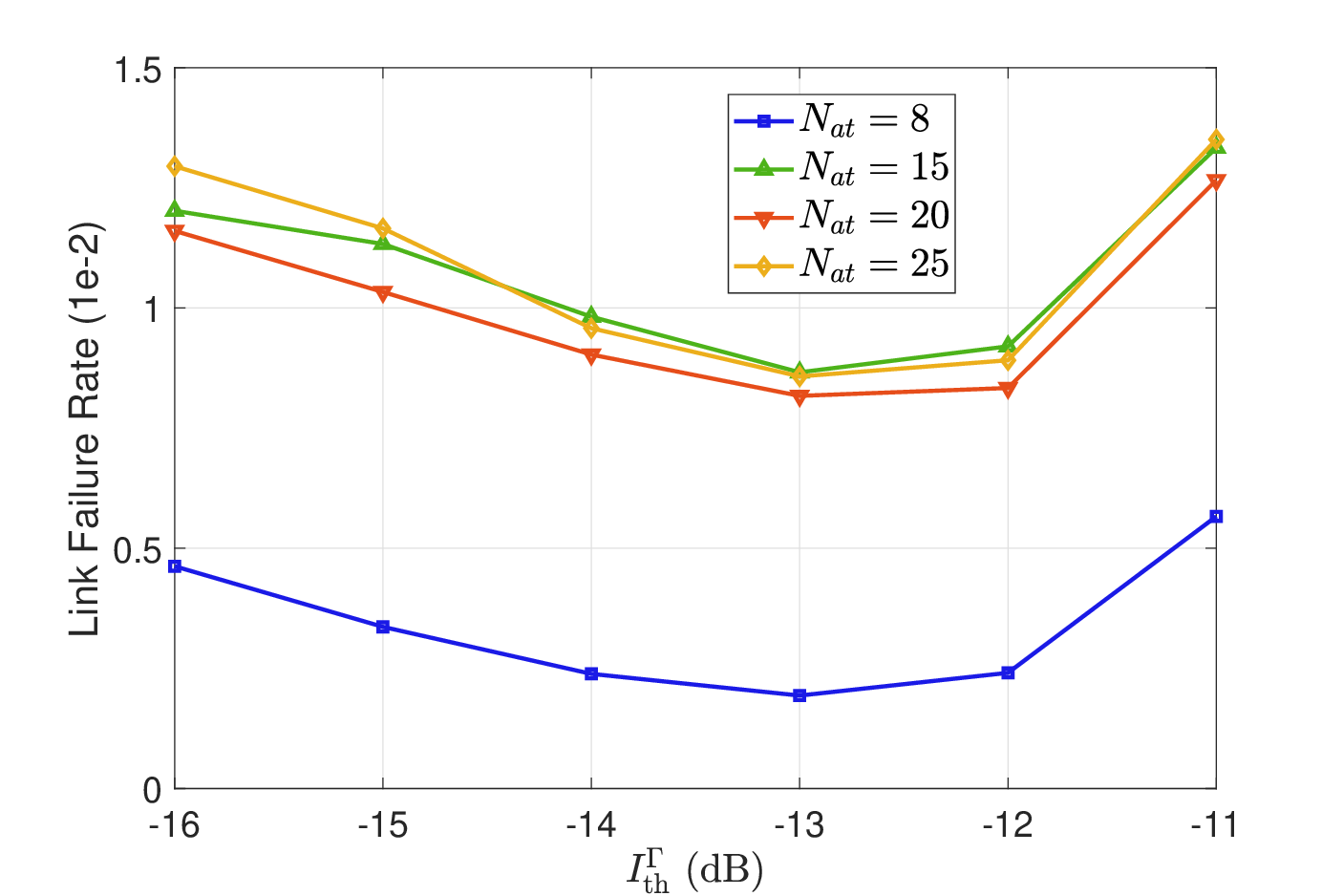}
			\caption{LF rate versus $I_{\rm th}^{\Gamma}$.}
			\label{ith}
		\end{minipage}
		\begin{minipage}{0.33\textwidth}
			\centering
			\includegraphics[width=\textwidth]{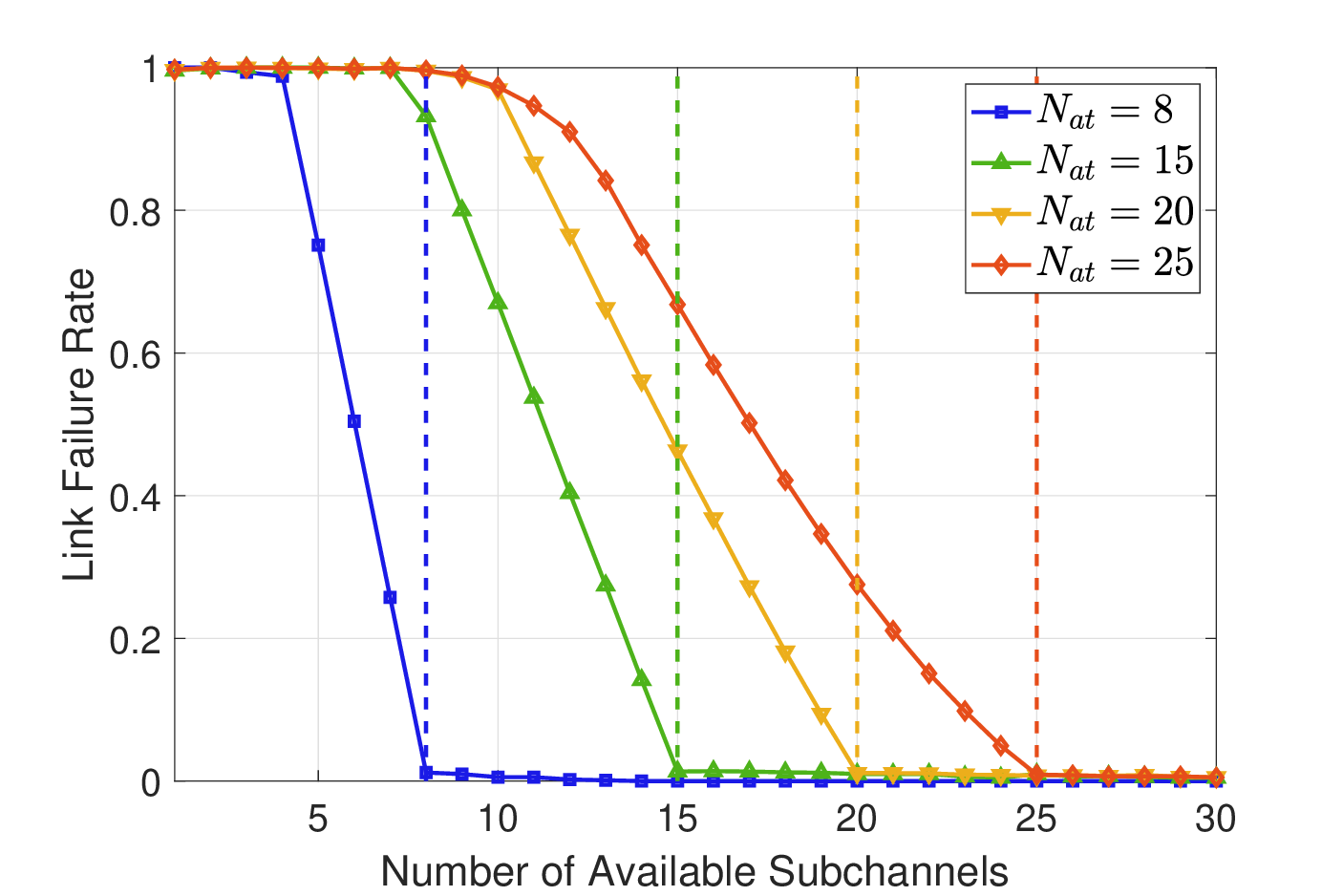}
			\caption{LF rate versus $C$.}
			\label{M}
		\end{minipage}
		\begin{minipage}{0.33\textwidth}
			\centering
			\includegraphics[width=\textwidth]{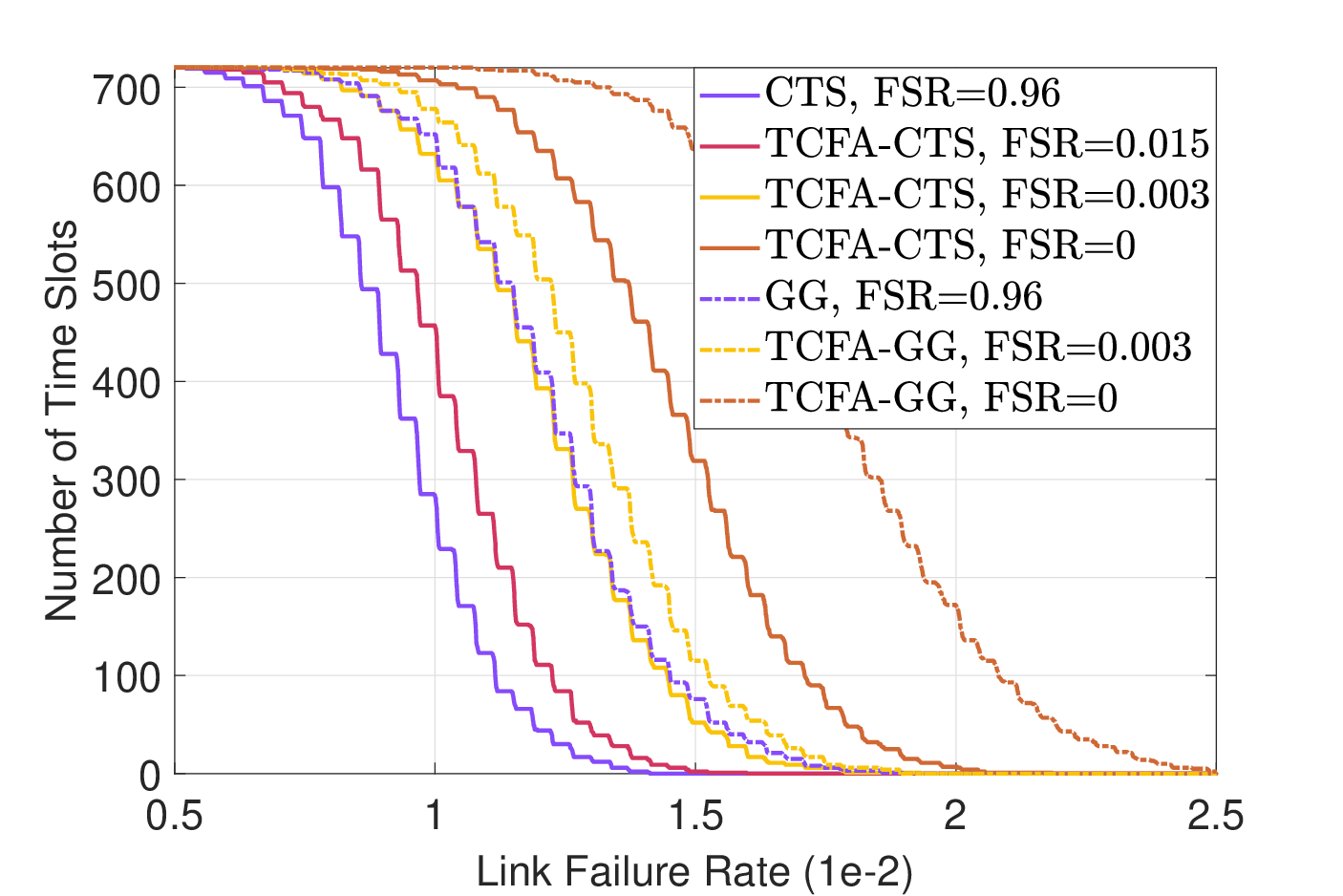}
			\caption{LF rate distribution under TCFA.}
			\label{fsr}
		\end{minipage}
		\vspace{-10pt}
	\end{figure*}
	
	\subsection{Link Failure Performance}
	Fig. \ref{lf}(a) and Table \ref{algs} show the LF rate performance of different frequency domain IM methods. For mega LEO systems with $N_{\rm at}$ up to 25 \cite{MAGSIC,constellations}, nearly 60\% of the feeder links will fail under random subchannel assignment. By introducing the GC-based IM methodology, the LF rate is substantially reduced to 6\% even with the simplest GC algorithm \textit{Global}. CTS, GG and HEAD all succeed in suppressing the average LF rate below 1\%, meaning that 99\% of the feeder links meet the requirement of ITU, so that IM of the entire system is almost achieved. In particular, CTS outperforms HEAD, one of the best universal GC algorithms, in both LF rate and time complexity due to effective utilization of the unique clique structure of $G(t)$, and GG delivers performance comparable to HEAD with remarkable computational efficiency. The modest performance gap between CTS and Gurobi arises from Gurobi's ability to incorporate the aggregate effect of numerous weak interference, albeit at exorbitant computational cost.
	
	Fig. \ref{lf}(b) provides the CCDF of aggregate I/N for various IM methods. Consistent with the results in \cite{densesky}, the superiority of frequency domain IM over space and power domain IM (look-aside and power control) in mega LEO systems is evident. 
	
	Fig. \ref{lf}(c) depicts the LF rate performance achieved by CTS under different $N_{\rm at}$. For current Starlink system with $N_{\rm at}=8$ \cite{starlinkiton}, perfect IM is achieved most of the time, and the LF rate is controlled below 1.5\% throughout the system development. Note that since the LF rate is normalized by $|\mathcal{W}(t)|$, it doesn't rise monotonously with the growth of $N_{\rm at}$.
	
	Fig. \ref{ksh} visualizes the subchannel allocation results obtained by the proposed methodology under $N_{\rm at} =8$ in Europe. Black semicircles represent GSs and lines with different colors stand for feeder links using different subchannels. 
	
	Moreover, the impact of different system and algorithm parameters on LF rate is further illustrated as follows.
	
	\textit{1) Impact of $I_{\rm th}^{\Gamma}$:} During the GC-based problem transformation, the predefined threshold $I_{\rm th}^{\Gamma}$ is a vital parameter that trades off between the attention of aggregate interference and the tractability of GC. Determining the appropriate value of $I_{\rm th}^{\Gamma}$ is called the threshold spectrum coloring (TSC) problem in \cite{tsc}, which can be solved through offline tuning before operation. Fig. \ref{ith} shows the performance of CTS under different $I_{\rm th}^{\Gamma}$. It can be observed that the optimal $I_{\rm th}^{\Gamma}$ is near -13 dB regardless of $N_{\rm at}$, which is very close to $I_{\rm th}^{\rm R}=$ -12.2dB. This indicates that reducing the number of edges to focus on strong interference is much more beneficial than considering the aggregate effect of weak interference. As a result, the gap between P1 and P2 is trivial, so that the proposed graph-based problem transformation is virtually optimal. 
	
	\textit{2) Impact of $C$:} Fig. \ref{M} validates the necessity and effectiveness of setting $C=N_{\rm at}$. We can observe that when $C<N_{\rm at}$, the LF rate increases almost linearly with $N_{\rm at}-C$ until 100\%. The LF rate is effectively controlled below 1\% under $C=N_{\rm at}$, and further increasing $C$ will only bring trivial benefit to LF rate while leading to plenty of vacant subchannels, thus severely degrading system capacity. Therefore, $C=N_{\rm at}$ achieves an excellent balance between accessibility and system capacity.
	
	\begin{figure*}[t]	
		\begin{minipage}{0.33\textwidth}
			\centering
			\includegraphics[width=\textwidth]{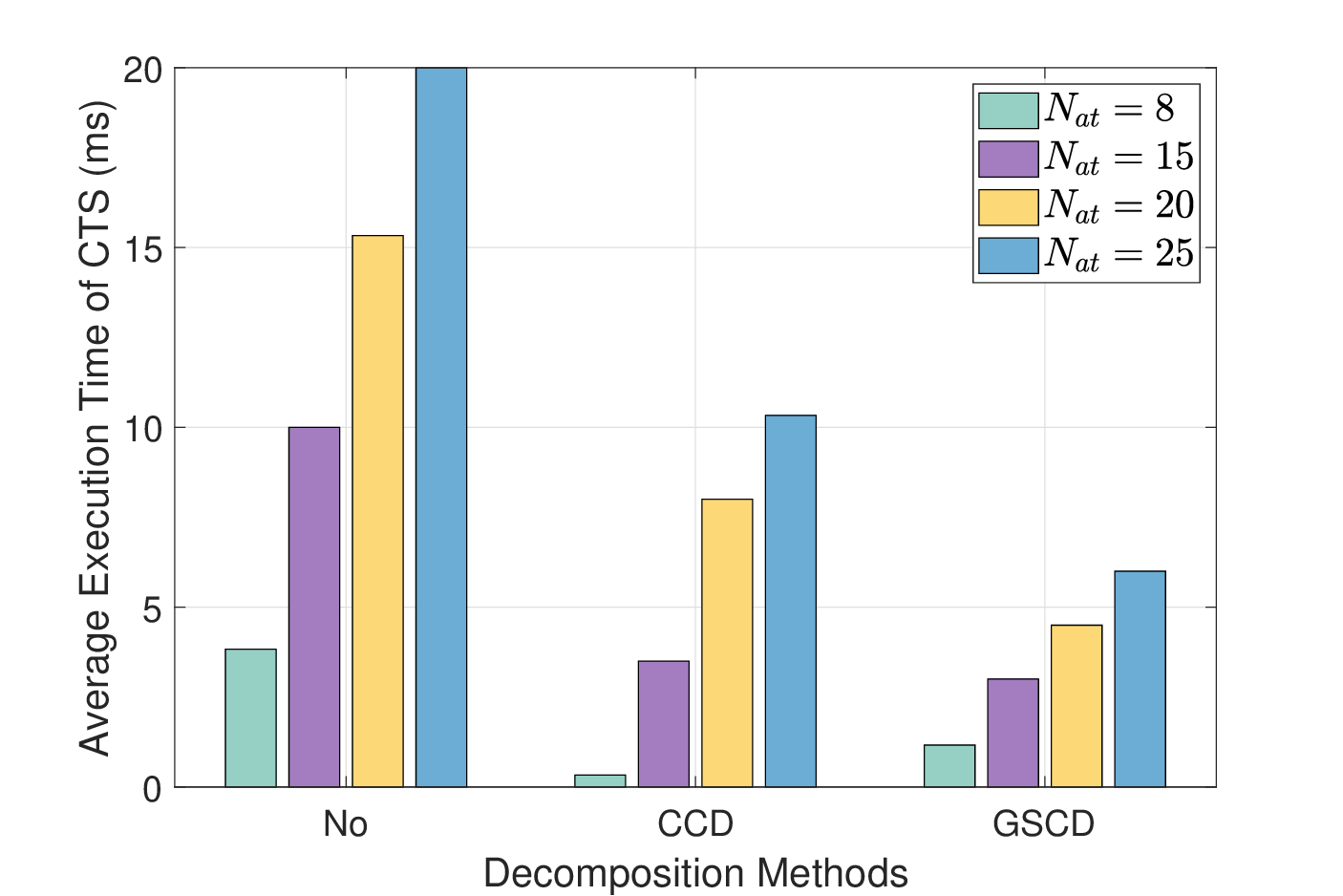}
			\caption{Performance of CCD and GSCD for CTS.}
			\label{dec}
		\end{minipage}
		\begin{minipage}{0.33\textwidth}
			\centering
			\includegraphics[width=\textwidth]{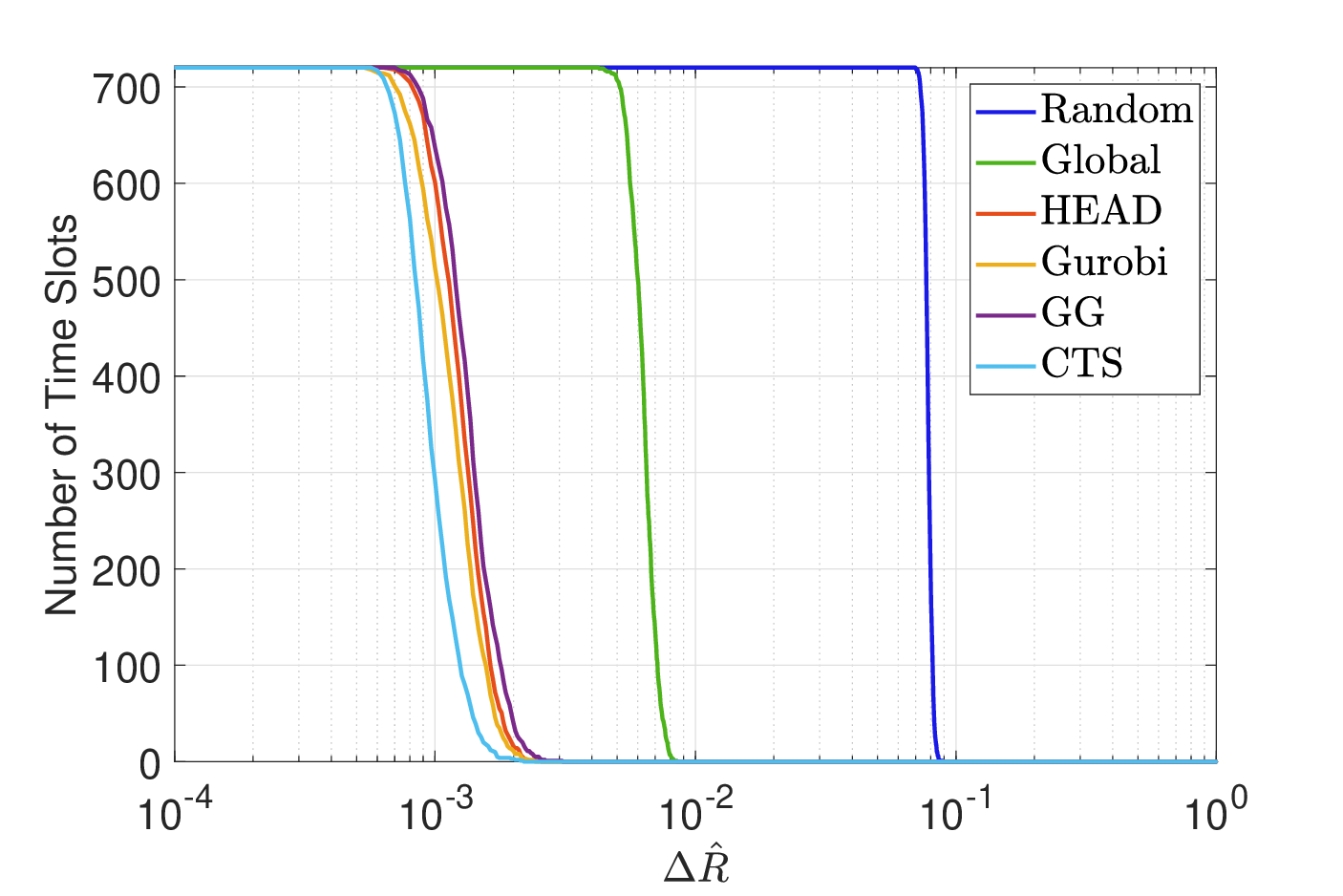}
			\caption{Time distribution of $\Delta \hat{R}$, $N_{\rm at}=25$.}
			\label{rdeg}
		\end{minipage}	
		\begin{minipage}{0.33\textwidth}
			\centering
			\includegraphics[width=\textwidth]{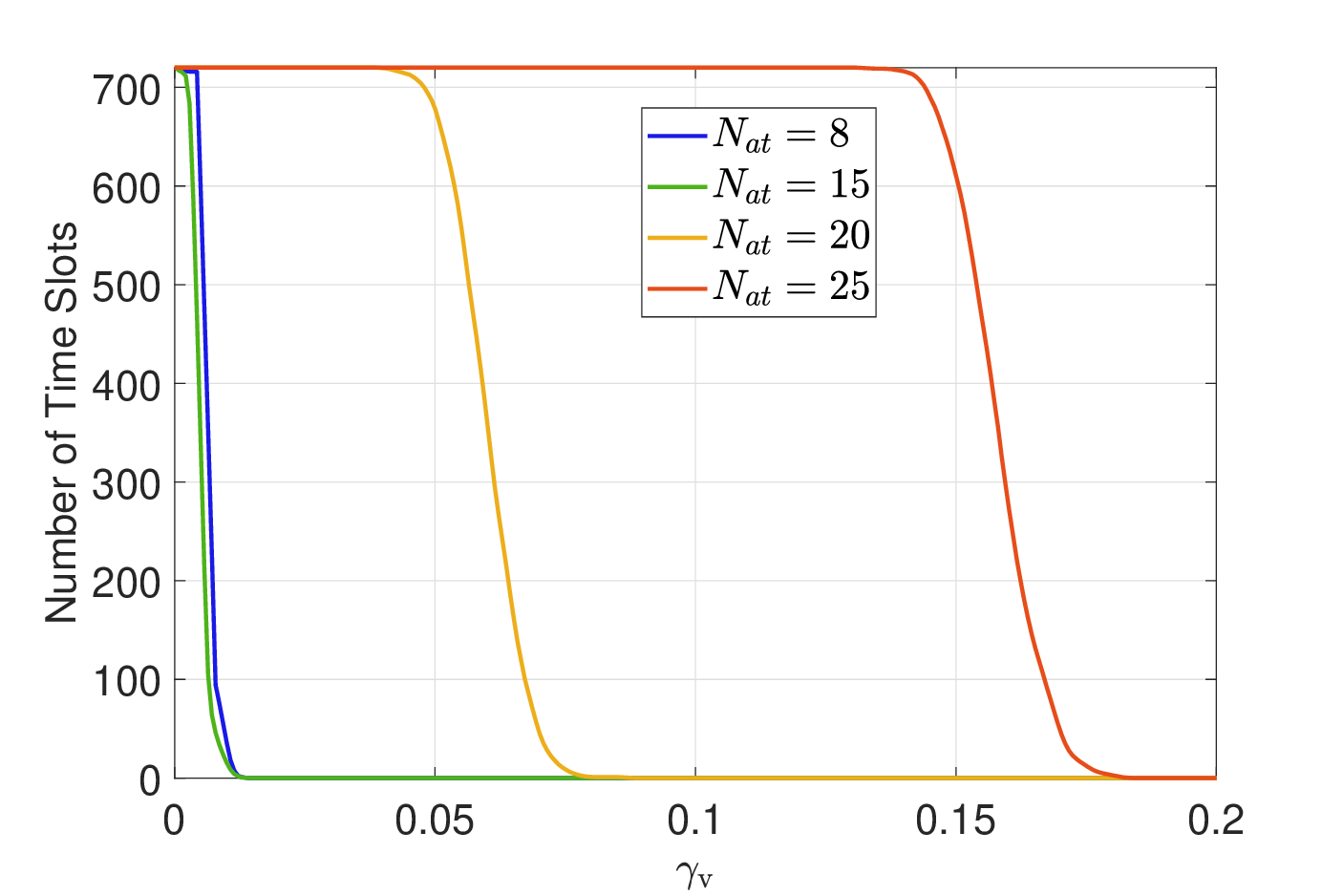}
			\caption{System capacity increase by VSU.}
			\label{td}
		\end{minipage}
		\vspace{-10pt}
	\end{figure*}
	
	\subsection{Performance of Time-Continuous Frequency Allocation}
	
	Fig. \ref{fsr} demonstrates the effectiveness of TCFA-CTS and TCFA-GG to achieve low LF rate and FSR simultaneously. The FSR during the simulation period is calculated by
	\begin{equation}
		{\rm FSR}=\frac{\sum\limits_{t=1}^{T_{\rm slot}} \sum\limits_{s \in \mathcal{S}}  \left(  1-\delta(c_s^t-c_s^{t-1}) \right) m_{\rm c}(s,t)}{\sum\limits_{t=1}^{T_{\rm slot}} \sum\limits_{s \in \mathcal{S}} m_{\rm c}(s,t)}.
	\end{equation}
	Simulation suggests that even under maximum elevation angle principle with the highest handover rate \cite{handover}, the probability of a satellite being selected by the same GS for two consecutive time slots still exceeds 70\% for $N_{\rm at}=25$ when the ephemeris is sampled at a 10-second resolution, emphasizing the necessity of considering TCFA. In this scenario, the FSR of all IM algorithms will exceed 95\% without addressing the issue of TCFA. In particular, solving the IM problem under TCFA constraints by Gurobi is computationally infeasible since the time and space complexity for single time slot is already overwhelming, and it further escalates exponentially when TCFA is considered. HEAD also fails to adapt to lower FSR since it requires unconstrained optimization over the entire graph. In comparison, the LF rate performance gaps between TCFA-CTS and CTS, as well as TCFA-GG and GG, are below 0.1\% even under low FSR tolerance. Particularly, the LF rate of TCFA-CTS is below 1.1\% under an FSR of only 0.015, demonstrating its great potential to achieve TCFA in mega LEO systems. Nevertheless, noticeable performance degradation can be observed when switching frequency is strictly prohibited since conflicts between constrained vertices are completely inevitable.
	
	\vspace{-2mm}
	\subsection{Effect of Mega Constellation Decomposition}
	
	Fig. \ref{dec} shows the performance of CCD and GSCD, where CTS algorithm is taken as example. It can be seen that the effect of CCD diminishes with the growth of $N_{\rm at}$, which can only save 50\% of the time under $N_{\rm at}=20$ and 25. In comparison, GSCD can stably reduce the complexity by nearly 70\% for all $N_{\rm at}$. Specially, CTS experiences almost no IM performance degradation after recoloring since structure \eqref{struct} is preserved throughout the entire process. By contrast, GSCD is not applicable to HEAD because the high performance of metaheuristic GC algorithms are based on global information of the entire graph. In this regard, the superiority of CTS is reinforced again.
	
	\subsection{Performance of System Capacity Maximization}
	
	Fig. \ref{rdeg} shows the distribution of $\Delta \hat{R}$ for different IM methods. It can be observed that except for random subchannel assignment which suffers from a  $\Delta \hat{R}$ of about 8\%, all IM methods based on GC achieve a $\Delta \hat{R}$ below 1\%. This confirms that the system capacity obtained by GC-based IM methods virtually approaches the theoretical upper bound. Notably, CTS outperforms other algorithms, including Gurobi, on system capacity, which succeeds in controlling the average $\Delta \hat{R}$ below 0.1\%. 
	
	Fig. \ref{td} depicts the effectiveness of VSU, which is evaluated by capacity improvement ratio $\gamma_{\rm v}=\frac{\sum_{s  \in \mathcal{W}(t)} R^{\rm r}_{s}}{\sum_{s \in \mathcal{W}(t)} {R}_{s}}-1$, where $R^{\rm r}_{s}$ denotes the link capacity of satellite $s$ after VSU. This technique is particularly useful for large $N_{\rm at}$ due to the emergence of a noticeable amount of vacant subchannels, as explained in Section \ref{S-uvs}. This can also be reflected in the density of subgraphs for MAGSs, which is denoted as $\rho(\mathscr{C}_i(t))$ for MAGS $i$.\footnote{$\rho(G)$ for general graph $G$ is calculated by $\frac{|E|}{|V|(|V|-1)}$, which peaks at 1 if $G$ is a complete graph (clique).} The average $\rho(\mathscr{C}_i(t))$ for $N_{\rm at}=8$ is 1 and decreases to 0.995, 0.944 and 0.823 for $N_{\rm at}=$ 15, 20, 25 respectively.
	VSU boosts system capacity by an average of 6\% and 17\% for $N_{\rm at}=$ 20 and 25 respectively, demonstrating its great potential for enhancing the spectrum efficiency of future mega LEO systems.
	
	\section{Conclusion}\label{SecE}
	
	This paper presents a time-continuous frequency allocation methodology based on graph coloring to address the challenges of strong interference in mega LEO systems with multi-antenna gateway stations. We utilize the unique characteristics of MAGSs to design GC algorithms with higher performance, and further modify them to achieve system-level IM and TCFA simultaneously. Additionally, we devise a list coloring-based vacant subchannel utilization method to further enhance spectrum efficiency and system capacity. Our research introduces the GC methodology to solve the IM problem of mega LEO systems with MAGSs for the first time, demonstrating substantial improvements on various performance metrics. {Future research directions include developing protocols for applying the proposed method in real mega LEO systems, as well as addressing the fairness issue of IM in constellation coexistence scenarios.}
	
	\appendices
	
	\section{Proof of Equation \eqref{dr}}\label{Apenproof}
	To prove $\delta_R>0$, we first introduce ratio $\gamma_R$ and rewrite inequality \eqref{21} as
	\begin{equation}
		(C/N)_{s} = \gamma_R {\rm SINR}_{s}, 1 \leq \gamma_R \leq 1+I_{\rm th}^{\rm R}.
	\end{equation}
	Since $\gamma_R>1$, we can immediately find that
	\begin{equation}
		\frac{\log_2(1+{\rm SINR}_{s})}{\log_2(1+(C/N)_{s})} > \frac{\log_2(1+{\rm SINR}_{s})}{\log_2(\gamma_R+\gamma_R {\rm SINR}_{s})}.
		\label{fangsuo}
	\end{equation}
	Then, we construct function $f(x)$ as follows
	\begin{equation}
		f(x)=\frac{\ln x}{\ln \gamma_Rx}=1-\frac{\ln \gamma_R}{\ln \gamma_R+\ln x},
	\end{equation}
	based on which the following equation holds:
	\begin{equation}
		\begin{aligned}
			f(x+1)-f(x) 
			=\frac{\ln \gamma_R \ln\frac{x+1}{x} }{(\ln \gamma_R +\ln x)[\ln \gamma_R +\ln (x+1)]}.
		\end{aligned}
	\end{equation}
	When $x={\rm SINR}_{s}$ and ${\rm SINR}_{s}>1$, the value of $f({\rm SINR}_{s}+1)-f({\rm SINR}_{s})$ is positive. Considering \eqref{fangsuo}, we finally get
	\begin{equation}
		\delta_R>f({\rm SINR}_{s}+1)-f({\rm SINR}_{s})>0.
	\end{equation}
	This completes the proof.

	\bibliographystyle{IEEEtran}
	\bibliography{reference}

\end{document}